\newtheorem{theorem}{Theorem}
\newtheorem{corollary}[theorem]{Corollary}
\newtheorem{proposition}[theorem]{Proposition}
\newtheorem{claim}[theorem]{Claim} 
\newcommand{\BibTeX}{B\kern-.05em{\sc i\kern-.025em b}\kern-.08em\TeX}
\newcommand{\dist}{\mathrm{dist}}
\newcommand{\Oh}{\mathcal{O}}
\newcommand{\ShortestPath}{\textsc{Shortest Path Most Vital Edges}\xspace}
\newcommand{\tpath}{$T$-path\xspace}
\newcommand{\tpathdel}{\textsc{\tpath-Deletion}\xspace}
\newcommand{\tpathadd}{\textsc{\tpath-Addition}\xspace}
\newcommand{\mtpathdel}{\textsc{\tpath-Deletion with False Promises}\xspace}
\newcommand{\SPMVE}{\textsc{SP-MVE}\xspace}
\newcommand{\fes}{\operatorname{fes}}
\DeclareMathOperator{\operatorClassNP}{{\sf NP}}
\newcommand{\classNP}{\ensuremath{\operatorClassNP}}
\DeclareMathOperator{\operatorClassFPT}{{\sf FPT}\xspace}
\newcommand{\classFPT}{\ensuremath{\operatorClassFPT}\xspace}
\DeclareMathOperator{\operatorClassW}{{\sf W}}
\newcommand{\classW}[1]{\ensuremath{\operatorClassW[#1]}}
\DeclareMathOperator{\operatorClassParaNP}{{\sf Para-NP}\xspace}
\newcommand{\classParaNP}{\ensuremath{\operatorClassParaNP}\xspace}
\DeclareMathOperator{\operatorClassXP}{{\sf X}P\xspace}
\newcommand{\classXP}{\ensuremath{\operatorClassXP}\xspace}
\newtcolorbox{def_box}[1]{enhanced,
  attach boxed title to top left={yshift=-3mm,yshifttext=-1mm, xshift=10mm},
  colback=gray!20!white,colframe=gray!80!black,
  boxrule=1pt,
  title={#1},
  fonttitle=\bfseries,
  coltitle=black,
  boxed title style={size=small,colframe=gray!80!black, colback=gray!40!white, boxrule=1pt},  arc = 2mm}
\newtcolorbox{problem_box}[1]{enhanced,
  attach boxed title to top left={yshift=-3mm,yshifttext=-1mm, xshift=10mm},
  colback=gray!20!white,colframe=gray!80!black,
  boxrule=1pt,
  title={#1},
  % fonttitle=\bfseries,
  coltitle=black,
  boxed title style={size=small,colframe=gray!80!black, colback=gray!40!white, boxrule=1pt},  arc = 2mm}
\newcommandx{\defsimpleproblem}[3][]{
    \begin{problem_box}{#1}
        % \begin{tabular*}{\textwidth}{@{\extracolsep{\fill}}lr}
        % \end{tabular*}
        {\textbf{Input:}} {#2}  \\
        {\textbf{Task:}} {#3}
    \end{problem_box}
}
\newcommandx{\defparproblem}[4][]{
    \begin{problem_box}{#1}
        % \begin{tabular*}{\textwidth}{@{\extracolsep{\fill}}lr}
        % \end{tabular*}
        {\textbf{Input:}} {#2}  \\
        {\textbf{Parameter:}} {#3}  \\
        {\textbf{Task:}} {#4}
    \end{problem_box}
}
\title{How to guide a present-biased agent\\ through prescribed tasks?
%%%% Cite as
%%%% Update your official citation here when published 
% \thanks{\textit{\underline{Citation}}: 
% \textbf{Authors. Title. Pages.... DOI:000000/11111.}} 
}
\author{
  Tatiana Belova\\
  St. Petersburg Department \\
  of Steklov Mathematical Institute\\
  of Russian Academy of Sciences \\
  % \texttt{\{Author1, Author2\}email@email} \\
  %% examples of more authors
   \And
  Yuriy Dementiev\\
  HSE University\\
  St. Petersburg, Russia\\
  % \texttt{\{Author1, Author2\}email@email} \\
  \And
  Fedor Fomin\\
  Department of Informatics,\\
  University of Bergen \\
  Norway\\
  \AND
  Petr Golovach\\
  Department of Informatics,\\
  University of Bergen \\
  Norway\\
  \And
  Artur Ignatiev\\
  HSE University \\
  St. Petersburg, Russia \\
}
\begin{document}
\maketitle

\begin{abstract}
The present bias is a well-documented behavioral trait that significantly influences human decision-making, with present-biased agents often prioritizing immediate rewards over long-term benefits, leading to suboptimal outcomes in various real-world scenarios. Kleinberg and Oren (2014) proposed a popular graph-theoretical model of inconsistent planning to capture the behavior of present-biased agents. In this model, a multi-step project is represented by a weighted directed acyclic task graph, where the agent traverses the graph based on present-biased preferences.

We use the model of Kleinberg and Oren to address the principal-agent problem, where a principal, fully aware of the agent's present bias, aims to modify an existing project by adding or deleting tasks. The challenge is to create a modified project that satisfies two somewhat contradictory conditions. On one hand, the present-biased agent should select specific tasks deemed important by the principal. On the other hand, if the anticipated costs in the modified project become too high for the agent, there is a risk of the agent abandoning the entire project, which is not in the principal's interest.

To tackle this issue, we leverage the tools of parameterized complexity to investigate whether the principal's strategy can be efficiently identified. We provide algorithms and complexity bounds for this problem. 
\end{abstract}

% keywords can be removed
% \keywords{First keyword \and Second keyword \and More}

%!TEX root = main-IJCAI.tex
\section{Introduction}

The notion of \emph{present bias} is a standard assumption in behavioral economics used to explain the gap between long-term intention and short-term human decision-making. A present-biased agent prioritizes immediate rewards over long-term benefits, leading to suboptimal outcomes in real-world scenarios. The present bias is one of the reasons for {time-inconsistent behavior} of an agent changing his optimal plans in the short run without new circumstances \cite{o1999doing,thaler2015misbehaving}. Some examples of human time-inconsistent behavior include indulging in unhealthy eating, procrastination on essential tasks and responsibilities, spending on immediate desires instead of saving, addiction abuse despite being aware of the negative consequences or neglecting the immediate efforts in environmental conservation. 

 While originating in behavioral economics, inconsistent planning is related to AI in several ways.
In \emph{Model of Human Behavior}, 
AI systems are often designed to interact with and assist humans. Understanding human behavior, including time inconsistency, is crucial for creating AI systems that can adapt to and predict human actions and preferences. AI models that consider time inconsistency provide more accurate recommendations or assistance \cite{evans2016learning}. In 
\emph{Personalization and Recommendations}, 
 recommendation systems rely on understanding and predicting user preferences. If users exhibit time inconsistency in their preferences, AI systems may need to adapt their recommendations accordingly \cite{DeanM22}. Finally, in 
 \emph{Reinforcement Learning}, agents make decisions to maximize cumulative rewards over time. Time inconsistency can affect an AI agent's ability to make optimal decisions, as it may need to evaluate future rewards and penalties accurately \cite{LesmanaSP22}.

Our work builds on Akerlof's model~\cite{Akerlof91}, in which the \emph{salience factor} causes the agent to prioritize immediate events over the future, with the cost of future tasks assumed to be $1/\beta$ times smaller than their actual costs for some present-bias parameter $\beta < 1$. Even a tiny salience factor could result in significant additional charges for the agent.

Kleinberg and Oren~\cite{KleinbergO14,KleinbergO18} introduced an elegant graph-theoretic model that incorporates the salience factor and scenarios from Akerlof. In this model, an agent traverses from a source $s$ to a target $t$ in a directed edge-weighted graph $G$. We will provide the formal description and begin with an illustrative example.

  \medskip\noindent\textbf{Kleinberg-Oren model example. } 
 Alice is a PhD  student, and she has to accomplish several research projects to obtain her PhD.  After discussing with her advisor Bob, they agree on several possible scenarios, see Fig.~\ref{fig:example}. Every arc of the task graph corresponds to a project, and the cost of an arc is the expected cost required to finish this task. The node $s$ is the starting position of Alice, and the node $t$ is the final node she wants to reach. Thus Alice has three possible options to pursue, corresponding to the three paths in the graph, namely, $P_1=sabct$, $P_2=sadt$, and $P_3=sadet$. She always wants to use the less costly option. To estimate the costs, Alice uses the present-bias parameter $\beta=1/3$---when estimating the cost of a path; she estimates the cost of the first arc correctly. However, she underestimates the costs of all further arcs of the path by factor $\beta$. Thus standing in $s$, Alice estimates the cost of $P_1$ as $6+(2+2+2)/3=8$, the cost of $P_2$ as  $6+(1+6)/3=8\frac{1}{3}$, and $P_3$ as  $6+(1+3+7)/3=9\frac{2}{3}$. She plans to pursue $P_1$. By accomplishing the task $sa$, Alice re-evaluates the remaining costs. The cost of the remaining part of $P_1$ is now  $2+(2+2)/3=3\frac{1}{3}$, which is more than the cost of the remaining part of $P_2$, that is, $1+ 6/3=3$. This impacts Alice's plans and now she decides to follow $P_2$. However, after arriving at $d$, she compares the remaining costs of $P_2$, which is $6$ and $P_3$, which is $5\frac{1}{3}$. Alice changes her plans again and switches to $P_3$.

 \begin{figure}[ht]
 \center{\includegraphics[scale=0.4]{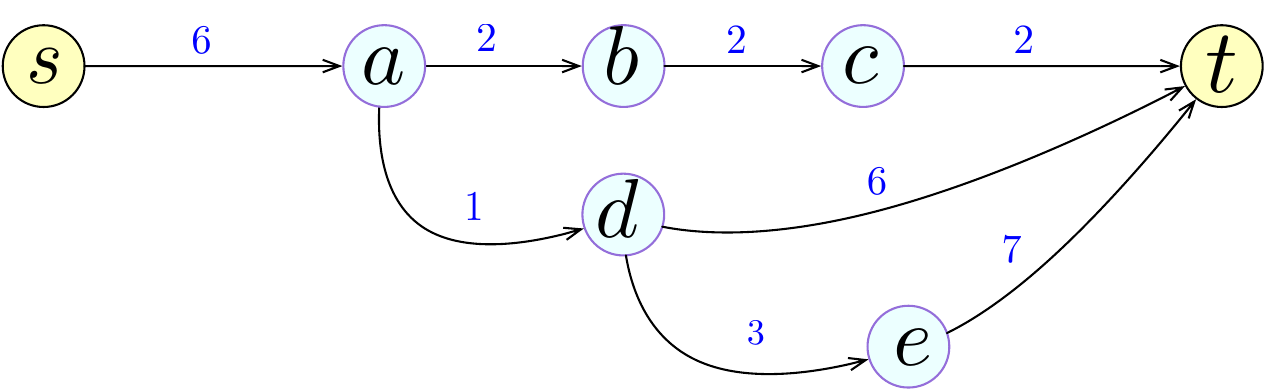}}
 \caption{For $\beta =1/3$, the agent will follow the path
 $sadet$ instead of selecting the shortest path $sabct$.
 }\label{fig:example}
 \end{figure}
 
 In this work, we use the model of Kleinberg and Oren to study a variant of 
 the principal-agent problem, where the principal could reduce the choices 
 to guarantee that the agent will accomplish some selected tasks. 
 We continue with example.
 
   \medskip\noindent\textbf{Motivating by reducing choices.}
   We continue with the example  Fig.~\ref{fig:example}.  To explain the phenomenon of abandonment, Kleinberg and Oren use the reward model. We assume that Alice expects a reward of $r$ for obtaining her PhD.    At every step, she evaluates the cost of completing the path, and if this cost exceeds $\beta  \cdot r$ (reward is also discounted by $\beta$), she abandons the whole project. For this example, we put $r=24$. While Bob, the doctoral advisor of Alice, wants her to finish her study, he has additional interests too. To Bob, the task corresponding to the arc $dt$ is the most exciting part of the whole project. However, if Alice proceeds according to the present bias protocol, she will go through $P_3$ and never accomplish the task so important to Bob. The first thing that comes to Bob's mind---to leave only the tasks of the path $P_2$ available to Alice---does not work.  For Alice standing in $s$, the estimated biased cost of   path $P_2$  is  $8\frac{1}{3}>\beta \cdot r=1/3 \cdot 24=8$. Thus, if Bob leaves $P_2$ as the only choice for Alice, she will abandon her studies. This brings us to the question that is the main motivation for our study. \emph{Is it possible to reduce choices to make both Alice and Bob happy?} That is, Alice will get PhD while working on the tasks that are most interesting to Bob. In our example, the solution is easy---Bob has to delete the task $de$---but in general, as we will see, this question brings interesting algorithmic challenges.  See Fig.~\ref{fig:example2}.
   
  \begin{figure}[ht]
 \center{\includegraphics[scale=0.4]{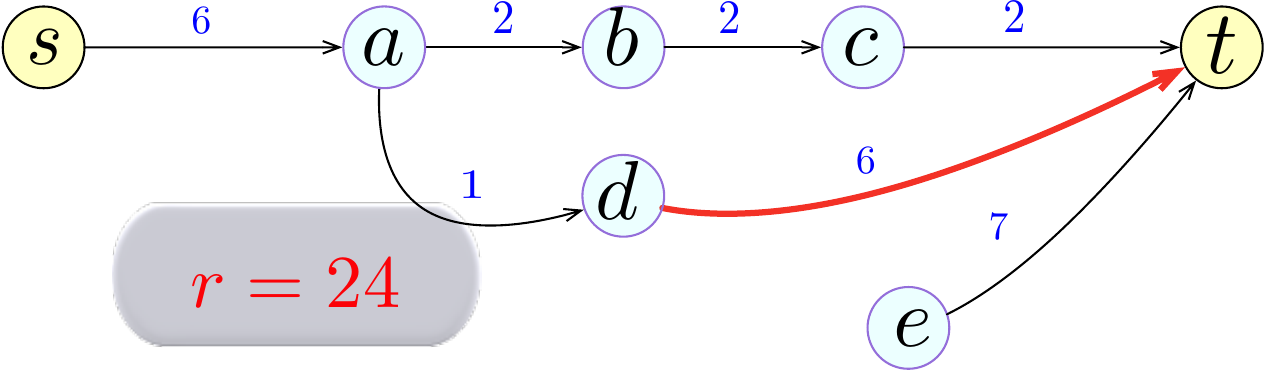}}
 \caption{Let  $P_1=sabct$ and  $P_2=sadt$. For $\beta =1/3$, the agent will follow the path
 $P_2$. Indeed, in node $s$, the estimated cost is $6+1/3(2+2+2)=8$, which is exactly the value $1/3 \cdot r$ of discounted reward, so the agent proceeds to $a$. When standing in  $a$,  the estimated cost of the remaining part of $P_1$ is now  $3\frac{1}{3}$  and of   $P_2$  is $ 3$. Both costs are less than the discounted reward, so the agent follows $P_2$. 
 }\label{fig:example2}
 \end{figure}

We proceed with the formal description of the Kleinberg-Oren's model.

 % \medskip\noindent\textbf{Kleinberg-Oren's Model~\cite{KleinbergO14}.}
   \begin{def_box}{Kleinberg-Oren's Model~\cite{KleinbergO14}}
    An instance of the {\em time-inconsistent planning model} is a 6-tuple $M = (G, w, s, t,   \beta, r)$ where:
    \begin{itemize}
     \item $G = (V(G), E(G))$ is a directed acyclic $n$-vertex graph called a {\em task graph}. $V(G)$ is a set of elements called {\em vertices}, and $E(G) \subseteq V(G)\times V(G)$ is a set of \emph{arcs} (directed {edges}). 
     Vertices of $G$ represent states of intermediate progress, whereas edges represent possible actions that transition an agent between states.
     \item $w : E(G) \to \mathbb{N}_0$ is a weight function representing the costs of transitions between states. The transition of the agent from state $u$ to state $v$ along arc $uv\in E(G)$ is of cost  $w(uv)$. 
   
     \item The agent starts from the start vertex $s \in V(G)$. 
     \item $t \in V(G)$ is the target vertex.
     %\item $r \in \mathbb{R}_{\geq 0}$ is the reward.
     \item The rational $\beta  \leq 1$ is the agent's present-bias parameter. 
      \item $r \in \mathbb{Q}_{\geq 0}$ is the reward the agent receives by reaching $t$.
       \end{itemize}
   \end{def_box}
   
 An agent is initially at vertex $s$ and can move along arcs in their designated directions. The agent's task is to reach the target $t$. The agent moves according to the following rule. When standing at a vertex $v$, the agent evaluates (with a present bias) all possible paths from $v$ to $t$.  In particular, a $v$-$t$ path $P \subseteq G$ with edges $e_1, e_2, \ldots,e_p$ is evaluated by the agent standing at $v$ to cost 
 \[\zeta_M(P) = w(e_1) + \beta  \cdot  \sum^{p}_{i=2}w(e_i).\] We refer to this as the {\em perceived} cost of the path $P$. For a vertex $v$, its \emph{perceived cost to the target} is the minimum perceived cost of any path to $t$, \[\zeta_M(v) = \min \{\zeta_M(P) \mid P \text{ is a }v\text{-}t \text{ path}\}.\] We refer to an  $v$-$t$ path $P$ with perceived cost  $\zeta_M(v)$ as to a \emph{perceived path}. If for the agent in vertex $v$ the perceived cost $\zeta_M(v)$ exceeds  $\beta\cdot r$, the value of the reward evaluated in the light of the present bias, the agent abandons the whole project. 
   Thus when in vertex $v$, the agent picks one of the perceived paths\footnote{If there are several paths of minimum perceived cost, we assume that an agent uses a consistent tie-breaking rule, like selecting the node that is earlier in a fixed topological ordering of $G$.} and traverses its first edge, say $vu$.
   After arriving at the new vertex  $u$, the agent computes the perceived cost to the target $\zeta_M(u)$, selects a perceived   $u$-$t$ path, and traverses its first edge. This repeats until the agent either abandons the project or reaches  $t$.
  
 \medskip\noindent\textbf{Guiding through specified arcs.} 
 We are interested in the variant of the principal-agent problem where the principal wants the present-biased agent to perform certain tasks. Using the Kleinberg-Oren model, we model this problem as the following graph modification problem.
 
 For a set of arcs $T\subseteq E(G)$, we say that an $s$-$t$ path $P$ is a \tpath  if $P$ contains all arcs of $T$. 
 Our work addresses the following question. 
 
 \begin{def_box}{}
 For a given set of prescribed tasks $T$, is it possible to modify the time-inconsistent planning model by 
 deleting (or adding) a few tasks such that the present-biased agent will reach $t$ by following a  \tpath? 
 \end{def_box}
 
 Formally, we study the following algorithmic problems. The first problem models the situation when the principal wants to guide the agent through the project by reducing the available options. The second problem models the situation when, instead of reducing the choice, the principal could add more choices from the given tasks. In this case, we assume that we will not create directed cycles when we add arcs.

\defsimpleproblem{\tpathdel}{Time-inconsistent planning model $M = (G, w, s, t,  \beta, r)$, integer $k$ and  a set of arcs $T\subseteq E(G)$.} 
{Find a subset of arcs $D\subseteq E(G)$ of size at most $k$ (or prove that no such set exists), such that after removing $D$ from $M$, the present-biased agent will follow a \tpath.}
We also consider the problem where instead of reducing the choice, the principal can add more choices from a selected family of tasks. In this case, we assume that we will not create directed cycles when we add arcs.
\defsimpleproblem{\tpathadd}{Time-inconsistent planning model $M = (G, w, s, t,   \beta, r)$, integer  $k$,  a set of arcs $T\subseteq E(G)$, and a set of additional weighted arcs $A\subset V \times V$.} {Find a set $S$ of at most $k$ arcs from $A$ (or prove that no such set exists), such that after adding these arcs to $G$ the agent will follow a \tpath.}

\medskip\noindent\textbf{Parameterized complexity.}
Our work extends the current understanding and offers a nuanced perspective on the interplay between computation tractability, graph theory, and decision-making scenarios involving present-biased agents.   
In our algorithmic study of  \tpathdel and \tpathadd, we use the tools of parameterized complexity. We briefly recap the main definitions. 
A \emph{parameterized problem} is a language $Q\subseteq \Sigma^*\times\mathbb{N}$ where $\Sigma^*$ is the set of strings over a finite alphabet $\Sigma$. Respectively, an input  of $Q$ is a pair $(I,k)$ where $I\in \Sigma^*$ and $k\in\mathbb{N}$; $k$ is the \emph{parameter} of  the problem. 
A parameterized problem $Q$ is \emph{fixed-parameter tractable} (\classFPT) if it can be decided whether $(I,k)\in Q$ in time $f(k)\cdot|I|^{\Oh(1)}$ for some function $f$ that depends of the parameter $k$ only.
\classFPT algorithms can be put in contrast with less efficient \classXP algorithms (for slice-wise polynomial), where the running time is of the form $f(k)\cdot |I|^{g(k)}$, for some functions $f,g$.
Respectively, the parameterized complexity class \classFPT is composed of fixed-parameter tractable problems. 
The $\operatorClassW$-hierarchy is a collection of computational complexity classes: we omit the technical
definitions here. The following relation is known amongst the classes in the $\operatorClassW$-hierarchy:
$\classFPT=\classW{0}\subseteq \classW{1}\subseteq \classW{2}\subseteq \cdots \subseteq \classW{P}$. It is widely believed that $\classFPT\neq \classW{1}$, and hence if a
problem is hard for the class $\classW{i}$ (for any $i\geq 1$) then it is considered to be fixed-parameter intractable. For our purposes,
to prove that a problem is   $\classW{1}$-hard it is sufficient to show that an \classFPT algorithm for this problem yields an  \classFPT algorithm for some  $\classW{1}$-complete problem.
We also use notation \classParaNP-hard with parameter $k$ that means \classNP-hard for a constant value of the parameter $k$.
We refer to \cite{CyganFKLMPPS15} for an introduction to parameterized complexity.  

\medskip\noindent\textbf{Our contribution.}
  % \todo[inline]{FF. to do. Add \classW1-hard }
  We start with establishing the hardness of the \tpathdel problem parameterized by $k$. The problem trivially belongs to the class \classXP. An algorithm of running time $|E(G)|^k\cdot  poly(|M|)$ is to try all subsets of at most $k$ arcs and simulate in polynomial time the actions of the agent on the graph, resulting in the removal of each of the subset.  In 
   \Cref{th_w1_k}, we  show that \tpathdel   is \classW1-hard  parameterized by $k$ even when $T$ consists of a single arc. This shows that designing an algorithm of running time $f(k)\cdot  poly(|M|)$ is highly unlikely for any function $f$ of $k$ only. We refine this result in 
   %\Cref{th1-para-np} and 
    \Cref{cor-para-np} by establishing that \tpathdel problem is \classParaNP-hard 
   %(that is, \classNP-hard for the constant parameter) 
   with various parameters. In particular, the problem is NP-hard when 
 the maximum cost of the $T$-path is  $ 6$, when  
the reward $r=48$, the model $M$ contains a unique
  $T$-path, or when the input graph
 $G$ has only one heavy arc, and all its other arcs are of weight $1$.
  Thus, %\Cref{th1-para-np} and ~
  \Cref{cor-para-np} refute the existence of parameterized algorithms for many natural parameters of the time-inconsistent model.
  
The intractability results of \tpathdel lead us to contemplate the following question: \emph{although deriving efficient algorithms for general scenarios seems unlikely, could certain structural properties of the instance be algorithmically exploited?}
In other words, while the overall problem may be inherently challenging, there may be specific properties within certain instances that could be leveraged to develop more efficient algorithms.  We introduce two such structural properties, shedding light on potential avenues for algorithmic improvement in the context of \tpathdel.
  
  The first structural property that we exploit algorithmically is the following. Suppose that in the input graph $G$, any path from $s$ to $t$ contains at most $m$ edges. This corresponds to the situation when any sequence of tasks, either taken or anticipated by the agent, contains at most $m$ steps.  In  \Cref{thm-fpt-km},  we give an \classFPT algorithm parameterized by $k$ and  $m$.  Our second parameterization concerns the situation when the underlying undirected graph has a small number of edge-disjoint cycles. Every such cycle could potentially force the agent to change the decision. Thus this parameter is related to the number of nodes where the time-inconsistent agent could change his mind. The main result here is   \Cref{fes-for-TPD}, which establishes the possibility of compressing the instance when the underlying graph of $G$ has a small number of edge-disjoint cycles. More precisely, a feedback edge set of an undirected graph is a set of edges whose removal turns the graph into a forest. Informally,  \Cref{fes-for-TPD} proves that there is a polynomial time algorithm that, for any instance of the problem, constructs an equivalent instance whose size is bounded by a polynomial of the minimum feedback edge set of the underlying undirected graph. 
  In other words, \tpathdel admits a polynomial kernel parameterized by the size of a feedback edge set of the underlying undirected graph. In particular, this implies that the problem is  \classFPT parameterized by the size of a feedback edge set.

  Finally, we provide several algorithmic results for the \tpathadd problem. 
  We consider the case when the set of prescribed arcs $T$ forms a path $P$ containing all vertices of the graph.
  In terms of principal-agent problem, this corresponds to the following interesting scenario. The principal already decided on the sequence of steps the agent should perform. However, in order for the agent to move along this path, the anticipated cost of the proposed path needs to be lowered. Coming back to our example with Alice and Bob, Bob already knows what work Alice has to perform but Alice is too scared by the anticipated amount of time she has to spend on these tasks. Could Bob add some tasks (shortcuts to the path) such that Alice at the end will do all the tasks from $T$?
  As we will see in \Cref{thm-hard-addition}, even in this case, \tpathadd remains intractable. On the positive side, in  \Cref{thm:intersection}, we prove that for a wide class of problems with a well-separable properties of additional tasks, the problem becomes  \classFPT.    

\medskip\noindent\textbf{Related work.}
  The mathematical ideas of present bias go back to the 1930s when Samuelson \cite{Samuelson1937} introduced the discounted-utility model. 
  It has developed into the hyperbolic discounting model, one of the cornerstones of behavioral economics~\cite{Laibson1994,McClure2004}.
 The model of time-inconsistent planning that we adopt for our work is due to Kleinberg and Oren~\cite{KleinbergO14,KleinbergO18}. It could be seen as a special case of the quasi-hyperbolic discounting model (see e.g.~\cite{Laibson1994,McClure2004}), which also generalizes both Samuelson's discounted-continuity model~\cite{Samuelson1937} and Akerlof's salience factor~\cite{Akerlof91}.   
While there is a lot of empirical support for this model, there are also known psychological phenomena about time-inconsistent behavior it does not capture~\cite{Frederick2002}. 

 There is a significant amount of follow-up work on the model of Kleinberg and Oren,   see e.g.~\cite{AAAI22_DFI,FominS20,halpern2023chunking,GravinILP16,KleinbergOR16,KleinbergOR17,meyer2022present}.  In particular, the following two problems are most relevant to our model. 
 
 The first problem is of finding a motivating subgraph. In our model, this corresponds to the situation when the set of prescribed arcs $T$ is empty. Tang et al. \cite{tang2017computational}  show that finding motivating subgraphs is NP-complete. They also investigate a few variations of the problem where intermediate rewards can be placed on vertices.
      Albers and Kraft \cite{Albers2018} independently show that finding a motivating subgraph is NP-complete. Furthermore, they show that the approximation version of the problem (finding the smallest $r$ such that a motivating subgraph exists) cannot be approximated in polynomial time to a ratio of $\sqrt{n}/3$ unless P = NP. Still, a $1 + \sqrt{n}$ -approximation algorithm exists. They also explore another variation of the problem with intermediate rewards. %Fomin and Str{\o}mme~
     Fomin and Strømme \cite{FominS20} studied the parameterized complexity of computing a simple motivating subgraph. Albers and Kraft \cite{AlbersK17} study a variation on the model where the designer is free to raise arc costs.
     
  The second problem related to our work is the $P$-motivating subgraph problem of    \cite{oren2019principal}. In this variant of the principal-agent problem with a present-biased agent, the principal identifies an $s$-$t$ path $P$ in the task graph $G$. Then the question is whether there is a subgraph of $G$, such that in this subgraph, the agent will follow along $P$. In our model, this corresponds to the situation when the prescribed arcs $T$ form the edge set of $P$. Also, the difference with our model is that Oren and Soker \cite{oren2019principal} look for any $P$-motivating subgraph, while in our model, we are interested in a subgraph from the original graph by a small number of arc deletions/additions. Oren and Soker \cite{oren2019principal} prove that the $P$-motivating subgraph problem is NP-complete even when there are only two different costs of arcs. In the same scenario of two costs,  Oren and Soker gave an algorithm that runs in polynomial time when the number of light arcs in the path $P$ is a constant. 
  
 Finally, in graph algorithms, a prevalent subject of interest revolves around graph modifications, wherein the objective is to alter a graph by modifying adjacencies or deleting vertices to achieve a graph with predefined properties. For comprehensive insights into this topic, we direct readers to surveys such as \cite{burzyn2006np,CrespelleDFG23,natanzon2001complexity}.
Our contribution can be viewed as an augmentation to the existing body of literature within this vibrant research domain.

%!TEX root = main-IJCAI.tex
\section{Motivate by Deletion}\label{sec-deletion}

In this section we study the complexity of the \tpathdel problem. We show that it is \classNP-hard, as well as \classW1-hard parameterized by $k$ and several other parameters that naturally arise in this setting. Also in Theorems 3 and 4 we show that the problem admits an \classFPT algorithm with respect to the structural parameter $\fes$, and also that by adding a new parameter---the maximum edge length of the path, one can obtain an efficient parameterized algorithm. 

To prove hardness, we will reduce the \classNP-hard problem \ShortestPath~\cite{Bazgan19} to our problem. The problem is known to be
\classW1-hard parameterized by $k$ even when the arcs' weights are polynomial in the number of vertices of the input graph
\cite{GolovachT11}. The original formulation of the \SPMVE problem assumes an undirected graph, but all the results are preserved for the case of a directed acyclic graph.

\defsimpleproblem{\ShortestPath (\SPMVE)}{A directed acyclic graph $G = (V, E)$ with positive arcs lengths, two vertices $s, t \in G$, and integers $k, \ell \in \mathrm{N}$.}{Is there an arc subset $S \subseteq E, |S| \leq k$, such that the length of a shortest $s$-$t$ path in $G-S$ is at least $\ell$?}

%\todo[inline]{$\beta<1$?}
% \todo[inline]{- using [[1]], presumably refute polynomial problem kernels for, e.g., k+r+w(T) (k plus reward plus aggregated weight of the arcs in T); I could imagine that a polynomial problem kernel for m+k could be also refuted this way;
% - using [[2]], W-hardness for several structural parameters (of the underlying undirected graph) could transfer, e.g., pathwidth (only increases by one in your reduction).

% For arxiv version.}

The following theorem rules out algorithms with a running time of $f(k) \cdot |V(G)|^{O(1)}$ for \tpathdel, for any function $f$ of $k$ only.
\begin{theorem} \label{th_w1_k}
   \tpathdel is \classW1-hard  parameterized by $k$ for any $\beta\leq 1$ even when $T$ consists of a single arc and the weights of arcs are polynomial in $|V(G)|$.
\end{theorem}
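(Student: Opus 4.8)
The plan is to reduce from \SPMVE, which the excerpt tells us is \classW1-hard parameterized by $k$ even with polynomially bounded arc weights. Given an instance $(G, s, t, k, \ell)$ of \SPMVE, I would build a time-inconsistent planning instance in which the prescribed set $T$ consists of a single designated arc, and where the agent traverses this arc if and only if the shortest $s$-$t$ path in $G$ has been pushed up to length at least $\ell$ by deleting at most $k$ arcs. Concretely, I would take the graph $G$, attach it between the global source and a junction vertex, and from that junction offer the agent exactly two options: a ``short'' gadget path whose first arc is the prescribed arc of $T$ and which the principal cannot touch, versus ``going through $G$'' to reach $t$. The weights on the $T$-gadget should be tuned (using $\beta$) so that the agent's \emph{perceived} cost of the $T$-gadget is a fixed threshold value, and the agent prefers the $T$-gadget over routing through $G$ precisely when the true shortest-path length through $G$ is at least $\ell$. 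Because $\zeta_M$ compares the cost of the first arc plus $\beta$ times the rest, I need to make the comparison come down to the length of the $G$-portion, so I would have the $G$-portion sit entirely in the ``$\beta$-discounted tail'' of both competing paths, i.e. put a common prefix arc before the branch point. The reward $r$ is set large enough that abandonment never occurs on either branch, so the only thing that matters is which perceived path the agent picks at the branch vertex.

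The key steps, in order, are: (1) describe the construction — source $s'$, a prefix arc $s' \to u$, then from $u$ two outgoing arcs, one entering a copy of $G$ at $s$ (with $t$ of $G$ then connected on to the global target), the other being the first arc $e^\star$ of a gadget $H$ that also leads to the global target, with $T = \{e^\star\}$; (2) choose the weights in $H$ and on the connector arcs so that, at vertex $u$, the perceived cost of the path through $e^\star$ equals some constant $c$, while the perceived cost of routing through $G$ equals $\beta \cdot (\text{shortest } s\text{-}t \text{ length in the current graph}) + (\text{constant})$; pick constants so that ``perceived $G$-route $\ge$ perceived $H$-route'' is equivalent to ``shortest $s$-$t$ length $\ge \ell$''; (3) verify that once the agent commits to $e^\star$ at $u$, the rest of $H$ is a single forced path to $t$, so the agent indeed follows a \tpath, and conversely if the agent takes the $G$-side it never uses $e^\star$; (4) argue the deletion budget: deletions should only ever be applied inside the copy of $G$ (deleting anything in $H$ or the connectors is useless or counterproductive), so a size-$\le k$ solution to \tpathdel corresponds exactly to a size-$\le k$ solution to \SPMVE; (5) check that the agent never abandons — set $\beta r$ above the maximum perceived cost anywhere in the construction — and that all weights remain polynomial in $|V(G)|$, which holds because \SPMVE's weights are polynomial and the gadget adds only a bounded overhead. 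Finally handle the tie-breaking convention so that the boundary case ($\ell$ exactly) goes the intended way, e.g. by a tiny perturbation or by using $\ell$ versus $\ell - 1$ in the inequality.

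I also need to handle the parameter $\beta$ being allowed to be any value $\le 1$, including $\beta = 1$ (no present bias). When $\beta = 1$ the agent is a rational shortest-path follower, and the reduction degenerates to: the agent takes $e^\star$ iff the $H$-route is no longer than the $G$-route, which is still governed by the shortest-path length in $G$, so the construction goes through with the constants adjusted; for $\beta < 1$ the discount factor just rescales the $G$-contribution but does not change the monotone dependence on the shortest-path length. So I would write the weight choices as functions of $\beta$ and $\ell$ and verify the equivalence uniformly.

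The main obstacle I anticipate is step (4): ensuring that no ``clever'' deletion outside the copy of $G$ — for instance deleting arcs of $H$ to make the $H$-route the \emph{only} route, or deleting the prefix/connector arcs — can help the principal. Deleting arcs of $H$ could backfire by making the agent abandon or by destroying the \tpath; deleting the connector into $G$ trivially forces $e^\star$ but must be ruled out by either making that connector undeletable (it can be a bundle of $k+1$ parallel-in-effect arcs, or simply argued away since $T$-path existence is about the agent's walk, not graph connectivity) — the cleanest fix is to route the $G$-copy so that the agent \emph{can} still reach $t$ after deleting the $G$-entry arc but then is forced onto $e^\star$ anyway, which would be a \emph{spurious} yes-instance; to prevent this I would instead make the $G$-entry arc heavy enough that deleting it is never in a minimal solution, or subdivide and pad so the only productive deletions are the ``internal'' arcs of $G$ that correspond to \SPMVE edges. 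Pinning down this padding so the correspondence with \SPMVE solutions is exact, while keeping weights polynomial, is the delicate part of the argument.
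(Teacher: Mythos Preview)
Your proposal is correct and essentially identical to the paper's proof: both reduce from \SPMVE, place the single prescribed arc on a short gadget branch competing against a copy of $G$ at the source vertex, tune weights so the agent prefers the $T$-branch iff the shortest $s$-$t$ path in (the residual of) $G$ is at least $\ell$, and protect the connector arcs into and out of $G$ by giving them multiplicity $k+1$---exactly the fix you conjecture in step~(4). The paper's only cosmetic differences are that it branches directly at $s'$ (no common prefix arc; instead both first arcs have weight~$0$), and it handles the boundary case by doubling the weights of $G$ and setting the gadget tail to $2\ell-1$ rather than perturbing.
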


\begin{proof}
We construct a parameterized reduction from the \SPMVE problem to the \tpathdel problem.  Let 
 $(G, s, t, k, \ell)$ be an input of  \SPMVE such that weights of  arcs of $G$ are bounded by a polynomial in the number of vertices of the  $G$.  We construct an instance of \tpathdel $M = (G', w, s', t', \beta, r)$, integer $k'$ and a set of arcs $T\subseteq E(G')$ such that in $G'$ at most $k'$ arcs can be removed to motivate the agent to pass along the $T$-path if and only if in $G$  it is possible to remove at most $k$ arcs so that the shortest path between $s$ and $t$ is at least $\ell$.

We construct graph $G'$ from $G$ as follows. We start the construction of $G'$ by multiplying all the arcs' weights of $G$ by $2$. Then we add new vertices $s', v_1, t'$ and arcs with the following weights: $w(s'v_1)=0$, $w(v_1t')=2\ell - 1$, $w(s's) = 0$, and  $w(tt') = 0 $, see  Fig.~\ref{th_w1_k_pic}.  We put one prescribed arc $T = \{s'v_1\}$, parameter $k' = k$, and reward $r = \frac{2\ell}{\beta}$. Finally, we make arc $s's$ and $tt'$ of multiplicity $k+1$. Thus $G'$ has $|V(G)| + 3$ vertices and  $|E(G)|+2k + 4$ arcs.

 %so that it cannot be removed. 

 %otherwise one of the edges can be deleted and the agent will follow the path $s'v_1t'$.

% \begin{figure}[ht]
% 		\centering
% 		\begin{tikzpicture}[inner sep=.4mm, scale=0.6, yscale = 1,
% 			label distance = 1mm,
% 			v/.style={circle,draw=black,fill=black,thin}]

% 			\node (v1) at (0, 0) [v,label={left:$s'$}]{};
%             \node (v2) at (1,-1.5) [v,label={below left:$s$}]{};
%             \node (v3) at (6,-1.5) [v,label={below right:$t$}]{};
%             \node (v4) at (7,0) [v,label={right:$t'$}]{};
%             \node (v5) at (3.5,1.5) [v,label={below:$v_1$}]{};

%             \draw [->, thick] (v1) to node[auto]{} (v2);
%             \draw[->, thick] (v1) to [bend left=25] node[auto, color=black]{$0$} (v2);
%             \draw[->, thick] (v1) to [bend right=25] node[auto, color=black]{} (v2);
%             \draw [->, thick] (v3) to node[auto]{} (v4);
%             \draw[->, thick] (v3) to [bend left=25] node[auto, color=black]{$0$} (v4);
%             \draw[->, thick] (v3) to [bend right=25] node[auto, color=black]{} (v4);
%             \draw [->, thick] (v1) to node[auto]{$0$} (v5);
%             \draw [->, thick] (v5) to node[auto]{$2\ell-1$} (v4);

%             \draw[dashed] (3.5,-1.5) ellipse (2.5cm and 1cm);
%             \node[xshift=2.5cm,yshift=-0cm] at (v2) {$G$};

% 		\end{tikzpicture}
% 		\caption{The construction of the graph $G'$ for \Cref{th_w1_k}.}
% 		\label{th_w1_k_pic}
% \end{figure}

\begin{figure}[ht]
 \center{\includegraphics[scale=0.4]{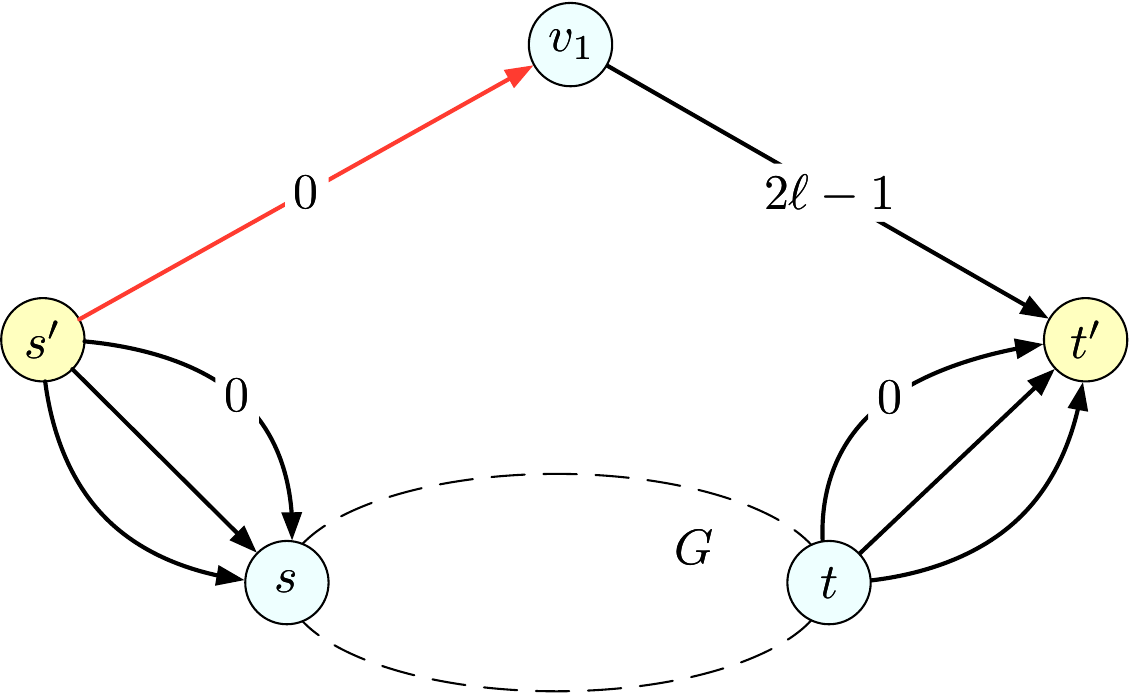}}
 \caption{The construction of the graph $G'$ for \Cref{th_w1_k}.}
 \label{th_w1_k_pic}
\end{figure}

%Also, all the weights on the arcs have a polynomial size from the number of vertices of the graph $G$ \cite{GolovachT11}.
Such a construction could clearly be done in time polynomial in $k$ and $|V(G)|$. Thus, to prove that this is an FPT-reduction, it remains to show that the reduction transforms an instance of \SPMVE into an equivalent instance of \tpathdel. In other words, we have to prove that $(M= (G', w, s', t', \beta, r),k',T)$ is a yes-instance of \tpathdel if and only if $(G, s, t, k, \ell)$ is a yes-instance of \SPMVE.

First, the principal wants the agent to pass through $T = {s'v_1}$ and thus through the path $s'v_1t'$. Hence, none of the arcs of this path could be removed.
Second, in $G'$ arcs $s's$ and $tt'$ are of multiplicity $k+1$, but the principal could remove at most $k' = k$ arcs. Therefore, it is safe to assume that in every solution to \tpathdel, none of these arcs is removed. This allows us to conclude that the only arcs the principal could remove to reach his goal are from $G$. Let $D$ be the set of $k$ arcs deleted by the principal from $G$.

The agent starts at vertex $s'$.  Currently, the perceived cost of the upper path $s'v_1t'$ is  $\beta (2\ell - 1)$. 
If the agent moves to $v_1$, he will follow the path $s'v_1t'$ because the perceived reward $\beta \cdot r$ is always more than the perceived costs along this path at each step. The only reason why the agent decides not to follow this path is that there is another path in $G'-D$ with a smaller estimated cost, which should be at most $\beta(2\ell - 2)$. The arcs $s's$ and $tt'$ are of zero costs, and the principal reaches his goal if and only if graph $G'-D$ has no path from $s$ to $t$ of cost at most  $2\ell - 2$.  Since in $G'$ the weights of the arcs taken from $G$ are twice their original weights in $G$, it means that the agent will move to $v_1$ instead of $s'$  in $G'-D$ (and thus will follow the plans of the principal) if and only if the length of a shortest  $s$-$t$ path in $G-D$ is at least $\ell$.
% \todo[inline]{Why not  $r = \frac{2\ell}{\beta}$? Done, $r= 2\ell/\beta$.  Say that this is FPT-reduction.}
%Consider the evaluation of the agent at the vertex $s'$: $\beta (2\ell - 1)$ for upper path $s'v_1t'$ and $\beta \cdot \text{dist}_{G'}(s, t)$ for lower path $s's\ldots tt'$, where $\text{dist}_{G'}(s,t)$ is the weight of the shortest path between vertices $s$ and $t$ in graph $G'$. Thus, he will choose to go to vertex $s$ if and only if there is a path between $s$ and $t$ of length no more than $2\ell - 2$, because the weights of the arcs of the graph $G$ are multiplied by 2.
%Note that if the agent prefers the upper path $sv_1t$, he will be motivated to complete the project due to our choice of reward.
\end{proof}

\begin{corollary}
The \tpathdel problem is \classW1-hard parameterized by the number of light arcs (set of arcs that have the minimum weight in the instance).
\end{corollary}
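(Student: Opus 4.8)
The plan is to observe that no new construction is required: the parameterized reduction built in the proof of \Cref{th_w1_k} already produces instances in which the number of light arcs is a function of $k$ alone. First I would inspect the weights occurring in the graph $G'$ obtained from an \SPMVE instance $(G,s,t,k,\ell)$. Every arc inherited from $G$ has weight $2w(\cdot)\ge 2$, since \SPMVE assumes positive arc lengths; the arc $v_1t'$ has weight $2\ell-1\ge 1$; and the remaining arcs---$s'v_1$ together with the $k+1$ parallel copies each of $s's$ and $tt'$---all have weight $0$. Hence the minimum arc weight of the instance is $0$, and the set of light arcs has size exactly $2k+3$, which depends only on the source parameter $k$.

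Next I would verify that this turns the very same construction into an \classFPT reduction for the new parameterization. The reduction runs in time polynomial in $k$ and $|V(G)|$, and it was already shown in the proof of \Cref{th_w1_k} that it maps yes-instances of \SPMVE to yes-instances of \tpathdel and no-instances to no-instances. The only additional point to check is that the target parameter is bounded by a computable function of the source parameter, and indeed the number of light arcs equals $2k+3$. Consequently, an algorithm deciding \tpathdel in time $g(p)\cdot|M|^{O(1)}$, with $p$ the number of light arcs, would decide \SPMVE parameterized by $k$ in time $g(2k+3)\cdot|M|^{O(1)}$; since \SPMVE is \classW1-hard parameterized by $k$ even with polynomially bounded weights~\cite{GolovachT11}, this yields the claimed hardness.

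I do not expect any genuine obstacle here, since the statement is essentially a bookkeeping consequence of the structure of the reduction. The one point deserving a line of care is the use of parallel arcs of multiplicity $k+1$: if one insists on simple graphs, each such bundle can be replaced by $k+1$ internally disjoint two-arc paths through fresh intermediate vertices, all of weight $0$; this keeps the number of light arcs $O(k)$, changes no perceived cost, and still forces every size-$k$ deletion set to avoid the connections realized by $s's$ and $tt'$, so the correctness argument of \Cref{th_w1_k} carries over verbatim.
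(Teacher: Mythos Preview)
Your argument is correct and follows the same overall strategy as the paper: reuse the reduction from \Cref{th_w1_k} and observe that the number of weight-$0$ arcs is exactly $2k+3$. The paper takes one additional step you skip: it first subdivides every arc of cost greater than~$1$ in $G'$ into a path of unit-weight arcs, so that the resulting instance has only the two weights $0$ and~$1$; it then counts the light arcs as $2k+3$. Your observation that the original $G'$ already has minimum weight~$0$ attained on precisely $2k+3$ arcs makes that subdivision unnecessary for the corollary as stated, so your route is slightly more direct. What the paper's extra step buys is a marginally stronger conclusion, namely hardness on instances with only two distinct arc weights; your variant does not deliver that, but it is not required by the corollary.
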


\begin{proof}
    We use reduction from \Cref{th_w1_k}. We turn all arcs of cost more than $1$ into a sequence of arcs of weight $1$ in graph $G'$ Fig.~\ref{th_w1_k_pic}. While the size of the graph increases, it is still bounded by a polynomial of the size of the original graph. In the new graph, we have arcs of only two weights, and the number of light arcs equals $2k+3$.
\end{proof}

% \begin{corollary}
% Unless $\classNP \subseteq \classCoNP / \sf{poly}$, the \tpathdel problem parameterized by $(k, r, w(T))$ does not admit a polynomial kernel.
% \end{corollary}

% \begin{proof}
%     We use reduction from \Cref{th_w1_k} and the fact that the \SPMVE problem parameterized by $(k, l)$ does not admit a polynomial kernel \cite{FluschnikHNN18}.
% \end{proof}

\begin{corollary}
    The \tpathdel problem parameterized by the parameter pathwidth (and the feedback vertex number) is \classW1-hard. 
\end{corollary}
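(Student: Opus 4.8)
The plan is to recycle the reduction of \Cref{th_w1_k} unchanged, but to feed it with instances of \SPMVE that are already hard with respect to the desired structural parameter rather than with respect to $k$. Concretely, I would invoke the (external) fact that \SPMVE -- equivalently, the unit-weight \textsc{Length-Bounded Cut} problem -- is \classW1-hard already when the \emph{pathwidth} of the input graph is the only parameter, with the deletion budget $k$ and the length bound $\ell$ being ordinary, unbounded parts of the input; and, separately, that it remains \classW1-hard when the \emph{feedback vertex number} $\fvs(G)$ is the parameter. Given such a hard instance $(G,s,t,k,\ell)$, applying the construction of \Cref{th_w1_k} verbatim produces an instance of \tpathdel, and the equivalence of the two instances has already been established in the proof of \Cref{th_w1_k}; so all that remains is to check that the task graph $G'$ has essentially the same pathwidth, resp. feedback vertex number, as $G$.

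That check is short. The construction of \Cref{th_w1_k} only (i) doubles all arc weights of $G$, which does not alter the graph, (ii) adds the three vertices $s',v_1,t'$ together with the arcs $s'v_1,\,v_1t',\,s's,\,tt'$, and (iii) assigns $s's$ and $tt'$ multiplicity $k+1$. For the pathwidth: take a path decomposition of $G$ of minimum width and insert $s'$, $v_1$, $t'$ into every bag; every arc of $G'$ then has both endpoints in some common bag (for instance $s's$, since $s$ lies in some bag and $s'$ lies in all of them), so $\operatorname{pw}(G')\le\operatorname{pw}(G)+3$ (parallel arcs impose no extra constraint). For the feedback vertex number: if $F$ is an optimum feedback vertex set of $G$, then $F\cup\{s',v_1,t'\}$ is a feedback vertex set of $G'$, because deleting these three vertices removes every arc incident to $s',v_1,t'$ -- in particular all copies of the multiplicity-$(k+1)$ bundles $s's$ and $tt'$, so the bound is robust even if parallel arcs are counted as cycles -- and leaves the forest $G-F$; hence $\fvs(G')\le\fvs(G)+3$. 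The graph $G'$ stays acyclic, exactly as in \Cref{th_w1_k}.

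Putting the two pieces together yields a parameterized reduction from \SPMVE parameterized by pathwidth (resp. by feedback vertex number) to \tpathdel parameterized by pathwidth (resp. by feedback vertex number), so the \classW1-hardness transfers. Note that the parameter blow-up introduced by the reduction is only an additive constant, which is harmless, and that the weights stay polynomial in the instance size.

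The only real work -- and the step I expect to be the genuine obstacle -- is securing the structural-parameter hardness of \SPMVE that the reduction consumes, since the excerpt only provides \classW1-hardness of \SPMVE parameterized by $k$. For pathwidth this should be citable from the literature on \textsc{Length-Bounded Cut} (the unit-weight special case of \SPMVE). If the feedback-vertex-number version is not readily available, I would establish it directly by a reduction from \textsc{Multicolored Clique} (or \textsc{Grid Tiling}) that builds an \SPMVE instance whose graph is a forest plus $\Oh(1)$ extra vertices -- precisely the shape the present corollary needs -- and then feed the output into \Cref{th_w1_k} as above.
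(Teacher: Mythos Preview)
Your proposal is correct and takes essentially the same approach as the paper: the paper also reuses the reduction of \Cref{th_w1_k} verbatim, invokes an external result (\cite{BentertHK22}) that \SPMVE is \classW1-hard parameterized by pathwidth (and by feedback vertex number), and observes that the construction increases these parameters by only an additive constant. The only cosmetic differences are that the paper states the tighter bound $\operatorname{pw}(G')\le\operatorname{pw}(G)+1$ rather than your $+3$, and that it simply cites the needed structural-parameter hardness of \SPMVE rather than sketching how one might re-derive it.
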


\begin{proof}
    The \SPMVE problem is \classW1-hard parameterized by pathwidth and maximum degree \cite{BentertHK22}, the problem is also \classW1-hard parameterized by feedback vertex number. In reduction from \Cref{th_w1_k}, the pathwidth of $G'$ has increased by no more than $1$. And the feedback vertex number has increased by no more than $3$.
\end{proof}

% \todo[inline]{Write about hardness for $|T| = 0$ and $|T|>=1$. Done}
\Cref{th_w1_k} can also be generalized to the case of any constant $|T| \geq 1$, we can split the arc $s'v_1$ into path $s'u_1\ldots u_{h}v_1$ with zero arcs where $h$ is a constant and set $T=\{s'u_1, u_1u_2, \ldots, u_hv_1\}$. It means that \tpathdel is \classW1-hard parameterized by $k$ with any constant $|T| \geq 1$. On the other hand, \Cref{th_w1_empty_set} shows that the problem is also \classW1-hard parameterized by $k$ with an empty set $T$.

The lower bound established in \Cref{th_w1_k} immediately questions whether there is a potential for more refined parameterizations to yield parameterized tractability. Unfortunately, the problem is  \classParaNP-hard for many natural parameters like the value of the reward $r$ or the cost of an $T$-path. That is, the problem remains \classNP-hard even when these parameters are constants. We summarize these results in the following theorem.
% , whose proof is moved to the supplementary material.

\begin{theorem}\label{cor-para-np}
The \tpathdel problem remains \classNP-hard even when one of the following conditions holds. 
\begin{enumerate}
    \item The costs of any  $T$-paths in $M$ does not exceed $C\leq 6$.
    \item The reward $r$ is a constant that does not exceed $48$.
    \item There is a unique  $T$-path in $G$.
    \item All arcs in $G$ but one are of weight $1$.
    \item Any path from $s$ to $t$ contains at most $m=8$ arcs.
\end{enumerate}
\end{theorem}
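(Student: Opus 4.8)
The plan is to prove each of the five NP-hardness claims by tailoring the reduction from \SPMVE used in \Cref{th_w1_k}, or in some cases by a direct reduction from a classical NP-hard problem such as \VertexCover. The unifying theme is that \SPMVE is NP-hard even with very restricted weights (polynomially bounded, or even 0/1 after subdividing), and the gadget of \Cref{th_w1_k} introduces only a constant amount of extra structure. For each part I would check which global quantity of the constructed instance $M=(G',w,s',t',\beta,r)$ is forced to be a constant or bounded by the corresponding parameter, possibly after a rescaling of all weights.

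First I would handle parts (1) and (2) together, since they are two sides of the same coin: in the construction of \Cref{th_w1_k} we set $r = 2\ell/\beta$ and the cost of the prescribed $T$-path $s'v_1t'$ is essentially $2\ell-1$, which is $\Theta(\ell)$; to make these constant I need $\ell$ to be a constant. The key observation is that \SPMVE remains NP-hard even when $\ell$ is bounded by a small constant --- this follows, for instance, by a reduction from \VertexCover in which one builds a graph whose every $s$-$t$ path has length exactly $2$ or $3$ depending on whether a chosen set covers all edges, so the threshold $\ell$ can be taken as $3$; I would spell out such a gadget or cite one. Plugging a constant-$\ell$ instance into the \Cref{th_w1_k} reduction gives a $T$-path cost of at most $6$ (taking $\ell=3$, doubled weights give $2\ell=6$; one must double-check whether it is $2\ell-1=5$ or needs the extra unit) and reward $r = 2\ell/\beta$; choosing $\beta = 1/4$, say, forces $r = 24 \le 48$, and with $\beta=1$ one gets $r=6$. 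One then verifies the arithmetic closes within the stated bounds $C\le 6$ and $r\le 48$; the slack in the $48$ suggests the authors allow a somewhat lossy constant, so I would not over-optimize.

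For part (3), a unique $T$-path, I would modify the gadget so that the subgraph through which the agent is steered is forced to be a single path while the "competing" subgraph (the copy of $G$ between $s$ and $t$) does \emph{not} contain $s'$ and $t'$, hence is never a $T$-path. In the construction of \Cref{th_w1_k} the only $T$-path is already $s'v_1t'$ provided $v_1$ has out-degree one and in-degree one; so I would argue that after deleting the solution set $D$, even though many $s$-$t$ walks exist in the $G$-part, none of them is a $T$-path because none uses the arc $s'v_1$ --- wait, that already gives uniqueness of the $T$-path in $G'$ itself, independent of $D$, so (3) may follow essentially for free from the same construction with the single remark that $v_1$ is a degree-$(1,1)$ vertex; I would state this carefully. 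Part (4), all arcs but one of weight $1$: here I would start from the \VertexCover-based constant-$\ell$ instance, subdivide every arc of weight $>1$ into a path of unit-weight arcs (as in the first corollary after \Cref{th_w1_k}), and place the single heavy arc as the arc $v_1t'$ of weight $2\ell-1$, checking that no other arc needs weight $>1$; one subtlety is that the multiplicity-$(k+1)$ arcs $s's$, $tt'$ of weight $0$ must be handled --- but weight $0$ arcs can be merged into single arcs or replaced by contracting, so only one truly heavy arc remains.

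Part (5), every $s$-$t$ path has at most $m=8$ arcs, is the one I expect to be the main obstacle, because the subdivision tricks used elsewhere blow up path length. Here I cannot afford polynomially long paths, so the \SPMVE source instance must itself have bounded path length; I would therefore again reduce from \VertexCover via a layered gadget of constant depth (start vertex, one layer of "edge" vertices or "vertex-choice" vertices, target), giving $s$-$t$ paths of length $2$ or $3$ in $G$, and then the \Cref{th_w1_k} wrapper adds the arcs $s's$, $s'v_1$, $v_1t'$, $tt'$, contributing at most a constant additive number of arcs to the longest path ($\le 3+4$-ish), landing at $m \le 8$. The care needed is to make sure the \VertexCover gadget for \SPMVE genuinely has the small-length property \emph{and} has polynomially bounded weights \emph{and} keeps $\ell$ constant simultaneously; I would design one gadget that serves parts (1), (2), (4), and (5) at once. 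Throughout, the correctness direction (yes-instance $\leftrightarrow$ yes-instance) reuses verbatim the reasoning of \Cref{th_w1_k}: the agent enters $v_1$ and commits to the $T$-path iff no surviving $s$-$t$ path in the $G$-part has perceived cost below the $T$-path's, i.e.\ iff the shortest such path has length $\ge \ell$.
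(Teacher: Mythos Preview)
Your overall strategy---reduce from \SPMVE and tune the gadget so that the relevant quantity becomes constant---is the paper's strategy too, but you diverge at the crucial step and the divergence is a genuine gap. You plan to keep the \Cref{th_w1_k} gadget unchanged and force the constants in parts~(1), (2), (4), (5) to come out small by taking the \SPMVE threshold $\ell$ itself tiny (you suggest $\ell=3$), to be justified by a hand-waved \VertexCover$\to$\SPMVE reduction ``in which every $s$-$t$ path has length exactly $2$ or $3$.'' That reduction does not work in the form you describe: with unit weights and all paths of edge-length $\le 3$, destroying every length-$2$ path $s\to x\to t$ decomposes into independent per-vertex choices and is polynomial, and the natural length-$3$ encodings either force deleting \emph{both} endpoints of each edge or allow a cheap per-edge deletion that bypasses the cover constraint. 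The result the paper relies on (Bazgan et al.) establishes \classNP-hardness of \SPMVE only for $\ell\ge 9$; getting substantially below that is precisely what your plan needs and you have not supplied it. A secondary loose end is your treatment of part~(4): the \Cref{th_w1_k} gadget has weight-$0$ arcs $s'v_1$, $s's$, $tt'$, and contracting $s's$ (as you propose) identifies $s'$ with $s$, so the agent's first step now carries an undiscounted $G$-arc weight and the perceived-cost comparison you need no longer holds.

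The paper sidesteps all of this by \emph{changing the gadget} rather than shrinking $\ell$. It inserts an extra vertex $v_2$ with $w(s'v_2)=1$, $w(v_2 s)=\ell$ and sets $w(s'v_1)=\ell/2$, $w(v_1 t')=1$, so the heavy weight sits on the non-$T$ branch and the unique $T$-path $s'v_1t'$ costs only $\ell/2+1$. Taking $\ell=10$ (known hard) gives cost~$6$; choosing $\beta$ in the narrow interval $(\tfrac{\ell-2}{4\ell},\tfrac{\ell-2}{4\ell-2})$ makes the comparison at $s'$ pivot exactly at $d_G(s,t)=\ell$ and forces $r=\ell/\beta\in(47.5,50)$, giving~(2) with $r=48$. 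Parts~(4) and~(5) then follow by citing further Bazgan et al.\ facts (\SPMVE hard with unit weights; hard when every $s$-$t$ path has $\le 5$ arcs) and subdividing the single heavy arc $v_2 s$ into a chain of unit arcs of multiplicity $k+1$. Your observation for~(3)---that the $T$-path is already unique because $v_1$ has in- and out-degree one---is correct and coincides with the paper's argument.
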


\begin{proof}

    The proof is similar to the \Cref{th_w1_k} except for some parameters. We construct a parameterized reduction from the \SPMVE problem to the \tpathdel problem. Let $(G, s, t, k, \ell)$ be an input of  \SPMVE such that weights of  arcs of $G$ are bounded by a polynomial in the number of vertices of the  $G$.  We construct an instance of \tpathdel $M = (G', w, s', t', \beta, r)$, integer $k'$ and a set of arcs $T\subseteq E(G)$ such that in $G'$ at most $k$ arcs can be removed to motivate the agent to pass along the $T$-path if and only if in $G$  it is possible to remove at most $k$ arcs so that the shortest path between $s$ and $t$ is at least $\ell$.

    We construct graph $G'$ from $G$ as follows. We add new vertices $s', v_1, v_2, t'$ and arcs with the following weights: $w(s'v_1)=\ell/2$, $w(v_1t')=1$, $w(s'v_2) = 1$, $w(v_2s)=\ell$ and  $w(tt') = 1$, see  Fig.~\ref{th1-para-np-pic}.  We put one prescribed arc $T = \{s'v_1\}$, parameter $k' = k$, and reward $r = \frac{\ell}{\beta}$. Finally, we make arc $s'v_2$, $v_2s$ and $tt'$ of multiplicity $k+1$. Thus $G'$ has $|V(G)| + 4$ vertices and  $|E(G)|+3k + 5$ arcs. 

  %   \begin{figure}[ht]
		% \centering
		% \begin{tikzpicture}[inner sep=.4mm, scale=1, yscale = 1,
		% 	label distance = 1mm,
		% 	v/.style={circle,draw=black,fill=black,thin}]

		% 	\node (v1) at (0, 0) [v,label={left:$s'$}]{};
  %           \node (v2) at (1,-1.5) [v,label={below left:$v_2$}]{};
  %           \node (v3) at (2,-1.5) [v,label={below left:$s$}]{};
  %           \node (v4) at (6,-1.5) [v,label={below right:$t$}]{};
  %           \node (v5) at (7,0) [v,label={right:$t'$}]{};
  %           \node (v6) at (3.5,1.5) [v,label={below:$v_1$}]{};

  %           \draw [->, thick] (v1) to node[auto]{} (v2);
  %           \draw[->, thick] (v1) to [bend left=25] node[auto, color=black]{$1$} (v2);
  %           \draw[->, thick] (v1) to [bend right=25] node[auto, color=black]{} (v2);

  %           \draw [->, thick] (v2) to node[auto]{$\ell$} (v3);

  %           \draw [->, thick] (v4) to node[auto]{} (v5);
  %           \draw[->, thick] (v4) to [bend left=25] node[auto, color=black]{$1$} (v5);
  %           \draw[->, thick] (v4) to [bend right=25] node[auto, color=black]{} (v5);

  %           \draw [->, thick] (v1) to node[auto]{$\ell / 2$} (v6);
  %           \draw [->, thick] (v6) to node[auto]{$1$} (v5);

  %           \draw[dashed] (4,-1.5) ellipse (2cm and 1cm);
  %           \node[xshift=2cm,yshift=-0cm] at (v3) {$G$};

		% \end{tikzpicture}
		% \caption{The construction of the graph $G'$ for Theorem 2.}
		% \label{th1-para-np-pic}
  %   \end{figure}

    \begin{figure}[ht]
    \center{\includegraphics[scale=0.4]{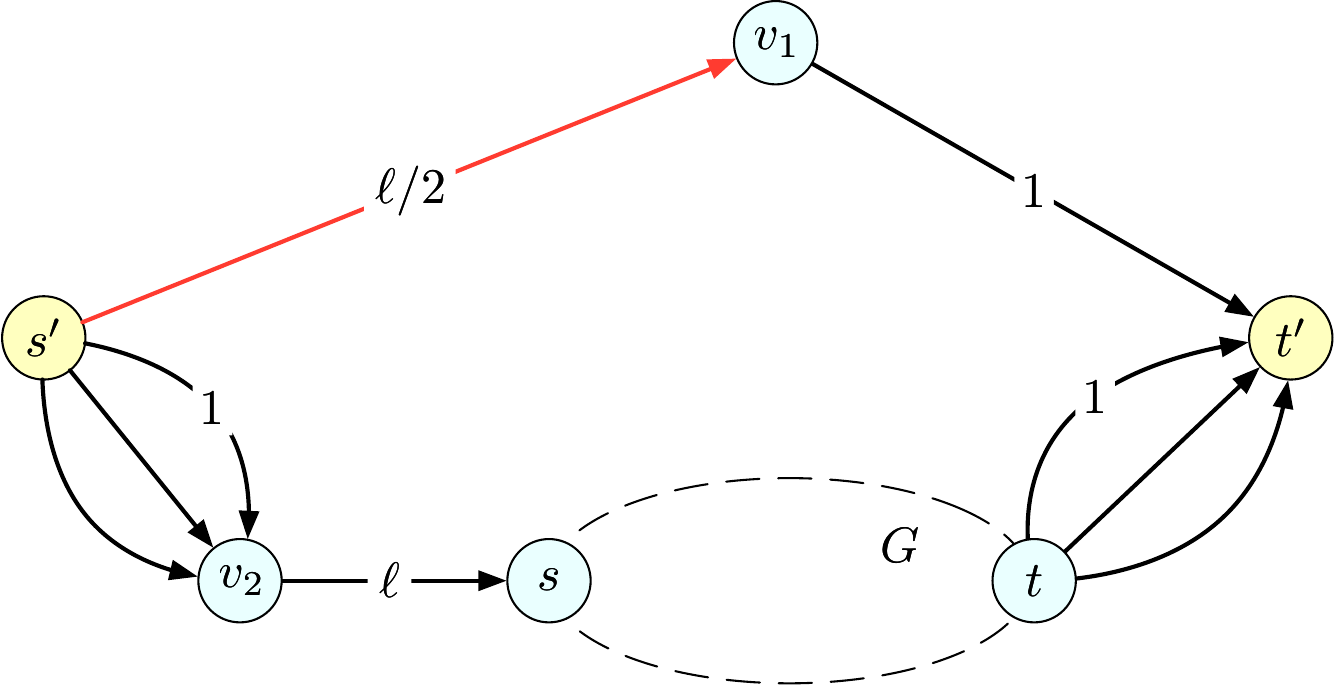}}
    \caption{The construction of the graph $G'$ for Theorem 2.}
	\label{th1-para-np-pic}
    \end{figure}

    Note that the agent will not remain at the vertex $s'$ since the anticipated cost of the path $s'v_1t'$ is less than $\beta r$.
    Let's set the agent's estimates at the vertex $s'$: 
    $$ 1 + \beta \ell + \beta (\ell-1) + \beta < \frac{\ell}{2} + \beta < 1 + \beta \ell + \beta \ell + \beta.$$
    Then $ \dfrac{\ell-2}{4\ell} < \beta < \dfrac{\ell-2}{4\ell-2}$, we set $\beta$ any rational number in this interval.
    
    We now show that the answer to the \SPMVE problem is positive if and only if the answer to the \tpathdel problem is positive. Thus, the agent will choose to go to graph $G$ if and only if there is a path between $s$ and $t$ of length no more than $\ell-1$. Hence, in graph $G'$, at most $k$ edges can be removed to motivate the agent to pass along the $T$-path if and only if in a graph $G$, it is possible to remove at most $k$ edges so that the shortest path between $s$ and $t$ is at least $\ell$.

    Let's look at the complexity of the problem with in terms of natural parameters.
     \begin{enumerate}
        \item It is proven in  \cite{Bazgan19} that \textsc{Shortest Path Most Vital Edges} problem is \classNP-hard with $\ell \geq 9$ for undirected graphs. It can be easily  seen that the result also holds for directed acyclic graphs. 
        Hence, in the reduction we solve a \classNP-hard problem with the value of the parameter $\ell=10$ using the solution of the \tpathdel problem with parameter cost of the $T$-path $P$: $\ell' = \ell / 2 + 1 = 6$.
        \item We know that $\frac{\ell-2}{4\ell} < \beta < \frac{\ell-2}{4\ell-2}$, so $\frac{(4\ell-2) \cdot \ell}{\ell -2} < r < \frac{\ell \cdot 4\ell}{\ell-2}$, since $r = \frac{\ell}{\beta}$. When $\ell = 10$ we have $47\frac{1}{2}<r<50$. Thus, the \tpathdel problem is \classNP-hard with $r = 48$.
        \item In graph $G'$ there is only one $T$-path $sv_1t$.
        \item According to  \cite{Bazgan19},  \textsc{Shortest Path Most Vital Edges} remains \classNP-hard on directed graphs with edges of unit wights. In this way, all arcs in graph $G$ will be of unit weight, and the arc $v_2s$ is split into $\ell$ unit arcs with multiplicities of $k+1$. In the proof of Theorem~2 we have all arcs except for one of weight $1$.
        \item \textsc{Shortest Path Most Vital Edges} problem is \classNP-hard on the graphs in which any path from $s$ to $t$ contains at most $5$ arcs \cite{Bazgan19}. Thus, in graph $G'$, any path from $s'$ to $t'$ contains no more than $8$ arcs.
    \end{enumerate}
    
\end{proof}

\begin{theorem}\label{th_w1_empty_set}
\tpathdel is \classW1-hard  parameterized by $k$ even when $T$ is empty set.
\end{theorem}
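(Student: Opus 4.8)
The plan is to reuse the reduction from \SPMVE behind \Cref{th_w1_k}, but to redesign the target gadget so that, when the principal fails to make the shortest $s$-$t$ path long enough, the agent \emph{abandons the project} instead of merely reaching $t$ along a non-prescribed path. Given an instance $(G,s,t,k,\ell)$ of \SPMVE with positive integral and polynomially bounded weights, I construct $G'$ from $G$ by adding two new vertices $s',t'$ and three arcs, each of multiplicity $k+1$: an arc $s's$ of weight $0$, a ``safe'' arc $s't'$ of weight $1$, and an ``exit'' arc $tt'$ of weight $2$. I set $\beta=\frac{2}{2\ell+3}$ (so that $0<\beta<1$), reward $r=\ell+\frac{3}{2}$ (so that $\beta r=1$), budget $k'=k$, $T=\emptyset$, and $M=(G',w,s',t',\beta,r)$. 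Since $(s',\,\text{a topological order of }G,\,t')$ is a topological order of $G'$, the graph $G'$ is a DAG, the construction is polynomial-time, and $k'=k$, so this will be an FPT-reduction once I show it preserves the answer.

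The heart of the argument is to determine the agent's behavior at $s'$ in $G'-D$ for an arbitrary set $D$ of at most $k$ deleted arcs. Because each of the three new arcs has multiplicity $k+1>k$, all three survive, so the only $s'$-$t'$ paths in $G'-D$ are the safe arc $s't'$, which the agent at $s'$ perceives to cost $1$, and the paths of the form $s'\to s\to Q\to t\to t'$, where $Q$ is an $s$-$t$ path in $G-(D\cap E(G))$, perceived at $s'$ to cost $\beta\cdot(w(Q)+2)$. Let $d$ be the length of a shortest $s$-$t$ path in $G-(D\cap E(G))$, with $d=\infty$ if none exists. From the choice of $\beta$ one verifies that $\beta(d+2)<1$ exactly when $d\le\ell-1$ and $\beta(d+2)>1$ exactly when $d\ge\ell$, and that equality never occurs, since $\beta(d+2)=1$ would force the integer $d+2$ to equal the non-integer $r$. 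Therefore: if $d\ge\ell$, the perceived cost at $s'$ is $1=\beta r$, the agent does not abandon, the strictly cheapest path is the safe arc, and the agent traverses $s't'$ and reaches $t'$, thereby following a ($T=\emptyset$)-path; if $d\le\ell-1$, the cheapest path at $s'$ is a $G$-path, so the agent steps onto $s's$, and then at $s$ every $s$-$t'$ path has the shape $s\to\cdots\to t\to t'$, so its perceived cost is at least $w(e_1)+\beta\cdot 2\ge 1+2\beta>1=\beta r$, which forces the agent to abandon at $s$ without reaching $t'$.

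Correctness then follows directly. If \SPMVE is a yes-instance, take a solution $S\subseteq E(G)$ with $|S|\le k$; deleting $S$ (as arcs of $G'$) gives $d\ge\ell$, so the agent reaches $t'$, and $(M,k',\emptyset)$ is a yes-instance of \tpathdel. Conversely, if $(M,k',\emptyset)$ is a yes-instance, fix $D$ with $|D|\le k$ for which the agent reaches $t'$; by the dichotomy this is possible only when $d\ge\ell$, so $S:=D\cap E(G)$ has size at most $k$ and its removal leaves every $s$-$t$ path of length at least $\ell$, i.e.\ \SPMVE is a yes-instance. Since \SPMVE is \classW{1}-hard parameterized by $k$ even on directed acyclic graphs with polynomially bounded weights, this establishes the theorem.

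I expect the main obstacle to be nailing down the arithmetic of the boundary case: the values $\beta=\frac{2}{2\ell+3}$ and $r=\ell+\frac{3}{2}$ are essentially forced by the simultaneous requirements that (i) $\beta r=1$, so the safe arc sits exactly on the knife-edge where the agent neither abandons on it nor has a cheaper alternative unless $G$ still contains a short path; (ii) the comparison of $\beta(d+2)$ with $1$ flips precisely between $d=\ell-1$ and $d=\ell$, matching the \SPMVE threshold; and (iii) no exact tie ever arises at $s'$, so the outcome does not depend on the tie-breaking rule. A secondary point to handle carefully is that the principal could waste part of its budget deleting copies of the three new arcs, but multiplicity $k+1$ makes such deletions useless, which is why the agent's behavior depends only on $D\cap E(G)$.
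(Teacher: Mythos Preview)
Your proof is correct. Both your argument and the paper's reduce from \SPMVE and rely on the same ``lure and abandon'' mechanism: if a short $s$-$t$ path survives in $G$, the agent is tempted off the safe route into $G$, where the perceived cost then strictly exceeds $\beta r$ and the project is abandoned; otherwise the agent takes the safe arc straight to $t'$.

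The difference is in packaging. The paper simply reuses the gadget from \Cref{cor-para-np} (with the extra vertices $v_1,v_2$, the arc $v_2s$ of weight $\ell$, and $\beta$ chosen in the interval $(\tfrac{\ell-2}{4\ell},\tfrac{\ell-2}{4\ell-2})$), observing that once $T=\emptyset$ the agent who reaches $v_2$ sees cost at least $\ell+\beta>\ell=\beta r$ and quits. Your construction is a fresh, leaner gadget with only two new vertices, a weight-$0$ arc $s's$, and the single knife-edge choice $\beta=\tfrac{2}{2\ell+3}$, $r=\ell+\tfrac{3}{2}$ so that $\beta r=1$ lands exactly on the perceived cost of the safe arc while $\beta(d+2)$ flips around $1$ precisely between $d=\ell-1$ and $d=\ell$. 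This buys you a self-contained proof that does not depend on the earlier theorem and avoids the somewhat delicate interval condition on $\beta$; the paper's version is shorter only because it piggybacks on work already done.
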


\begin{proof}
    We use the same reduction from \Cref{cor-para-np} except for the set $T=\emptyset$. Let's look at the agent's decision at the vertex $v_2$, the agent evaluates any path $v_2s\ldots tt'$ at least as $\ell+\beta$, which is more than expected reward $\beta r = \ell$, so if the agent goes to the vertex $v_2$, then he will never reach the vertex $t'$.
\end{proof}

% \todo[inline]{Any hope to optimize $r$? For what values of $r$ we have polynomial algorithm? Algoritm only for maybe $r=1,2$.}

% \todo[inline]{What follows from here and to the  mark ***  could be removed. \Cref{th1-para-np}, is it a special case of item 4 of \Cref{cor-para-np}?}

The lower bounds of  \Cref{th_w1_k} and \Cref{cor-para-np} create an impression that no efficient algorithms for \tpathdel could exist for any reasonable scenario. Despite that, we can identify two interesting parameterizations that make the problem computationally tractable. The first parameter models the natural situation when any sequence of tasks, either taken or anticipated by the agent, contains a bounded number of steps $m$. 
In other words, in this model we assume that  in the input graph $G$, any path from $s$ to $t$ contains at most $m$ edges.
Although our problem is \classParaNP-hard for the parameter $m$ and \classW1-hard parameterized by $k$, our next theorem provides an \classFPT algorithm parameterized by $k$ and  $m$.

\begin{theorem}\label{thm-fpt-km}
    \tpathdel problem is solvable in time  $\Oh(m^{2k}) \cdot poly(|M|)$. 
    %\classFPT from a pair of parameters $(k, m)$ with complexity  $\Oh(m^{2k}) \cdot poly(n)$.
\end{theorem}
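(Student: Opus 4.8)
The plan is to design a bounded-search-tree algorithm that simulates the present-biased agent and, while doing so, decides which arcs to delete. Two structural facts drive the design. First, as $G$ is a DAG, the agent's trajectory is a simple $s$-$t$ path $W=v_0v_1\cdots v_p$, and by hypothesis $p\le m$. Second, once a deletion set $D$ is fixed, the trajectory in $G-D$ is completely \emph{deterministic}: at a vertex $v$ the agent moves to the out-neighbour minimising $w(vu)+\beta\cdot\dist_{G-D}(u,t)$ (breaking ties by the fixed topological order), and it abandons at $v$ exactly when this minimum exceeds $\beta r$. Hence it suffices to enumerate, along a guided branching, a family of at most $\Oh(m^{2k})$ candidate deletion sets, to simulate the agent for each of them in $poly(|M|)$ time, and to accept if and only if some candidate yields a walk that reaches $t$ and contains every arc of $T$.

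The search tree is grown by running the simulation from $s$ with a current deletion set $D'$ (initially empty, with budget $k$), branching only when we choose to interfere. Suppose the simulation currently stands at a vertex $v$. If the agent abandons at $v$ in $G-D'$ and $v\ne t$, this node is a dead leaf: deleting more arcs only raises perceived costs, so abandonment at $v$ cannot be repaired once the prefix up to $v$ is committed. Otherwise let $u$ be the agent's perceived-cheapest move at $v$. We create one \emph{accepting} child, which advances the simulation to $u$ without touching the budget, and, if the budget is positive, one \emph{interfering} child for each of the following: adding $vu$ itself to $D'$; or adding to $D'$ one arc lying on a currently shortest $u$-$t$ path in $G-D'$ --- by the bound on path lengths there are at most $m$ such arcs. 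Every interfering child decreases the budget by one, recomputes distances, and re-examines $v$. Along any root-to-leaf path there are at most $k$ interferences, each specified by the walk vertex at which it occurs (one of at most $m$) and by the deleted arc (one of at most $m+1$), while all remaining steps are forced; a routine calculation then bounds the number of leaves by $\Oh(m^{2k})$. At each leaf the simulation has terminated, we test in polynomial time whether the realised walk reaches $t$ and contains $T$, and we return the first successful $D'$ (or report that no solution exists).

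Soundness is immediate --- a $D'$ returned at a successful leaf has size at most $k$, since exactly one arc is added per interference, and by construction the agent follows a \tpath in $G-D'$. The point that needs care is completeness: from an arbitrary solution $D$ one must produce a root-to-leaf path whose simulation reproduces the agent's walk in $G-D$. The guiding observation is that whenever $D$ diverts the agent at a walk vertex $v$ away from the arc $vu$ that it takes in $G-D'$ for the partially reconstructed $D'$, then either $vu\in D$ (reconstructed by the first interference option) or $D$ strictly increases $\dist(u,t)$, hence $D$ meets every currently shortest $u$-$t$ path and in particular contains an arc we may delete via the second option; iterating and always choosing an arc of $D$ raises $\dist_{G-D'}(u,t)$ monotonically towards $\dist_{G-D}(u,t)$ (no overshoot, since we only ever delete arcs of $D$, so $G-D'\supseteq G-D$ throughout), and stopping exactly when $u$ ceases to be strictly cheaper makes the tie-breaking rule select the intended successor. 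One also checks that each step of this reconstruction consumes a fresh arc of $D$, so the budget used along $W$ is at most $|D|\le k$; arcs of $D$ that affect neither the walk nor the distances of out-neighbours met along it are simply left alone. I expect this reconstruction lemma to be the main obstacle: one has to argue that the impoverished repertoire of moves ``delete $vu$'' and ``delete one arc of a shortest $u$-$t$ path, then re-examine $v$'' faithfully and economically simulates the cumulative effect of an arbitrary optimal deletion set.
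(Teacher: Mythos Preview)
Your branching scheme has a genuine soundness gap. When you ``accept'' at a vertex $v_i$ and advance to $v_{i+1}$, you certify only that the agent's perceived-cheapest move at $v_i$ in $G-D'_i$ is $v_{i+1}$, where $D'_i$ is the deletion set accumulated \emph{so far}. But the final $D'$ output at the leaf may be strictly larger than $D'_i$, and those later deletions can raise $\dist(v_{i+1},t)$ enough that in $G-D'$ the agent at $v_i$ prefers a different out-neighbour. Thus the ``realised walk'' you test at the leaf need not be the agent's walk in $G-D'$, and your sentence ``by construction the agent follows a \tpath in $G-D'$'' is unjustified.

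A concrete instance with $\beta=1$ and large $r$: vertices $s,a,b,c,d,t$; arcs $sa,sb,ac,ad$ of weight~$0$; $ct$ of weight~$1$, $dt$ of weight~$10$, $bt$ of weight~$5$; $T=\{ad\}$, $k=2$. In $G$ the agent walks $sact$. Your tree contains the branch: accept at $s$ (since $\dist(a,t)=1<5=\dist(b,t)$, agent $\to a$); interfere at $a$ by deleting $ac$; re-examine $a$ (now agent $\to d$); accept; accept. The realised walk is $sadt$, it contains $ad$, and you return $D'=\{ac\}$. But in $G-\{ac\}$ the agent at $s$ now sees $\dist(a,t)=10>5=\dist(b,t)$ and goes to $b$; so $D'=\{ac\}$ is not a solution at all (the minimal solutions here, e.g.\ $\{ac,sb\}$, have size~$2$). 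Patching this by re-simulating from scratch at each leaf restores soundness, but then your completeness reconstruction breaks: following the walk in $G-D$ with $D=\{ac,sb\}$ as you describe produces exactly the leaf $D'=\{ac\}\subsetneq D$ above, which the re-simulation rejects.

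The paper sidesteps this by never committing to a prefix. It simulates the agent in the current graph \emph{from $s$}, collects the union $H$ of the walk together with one perceived path at each walk vertex (so $|E(H)|\le m^2$), observes that any solution must intersect $H$ (otherwise all perceived costs along the walk are unchanged and the agent repeats the same walk), branches on which arc of $H$ to delete, and then re-simulates from scratch in the smaller graph. Depth~$k$, fan-out~$m^2$, done. Your interfere options are essentially the arcs of a single perceived path, and the accept step tries to spread the branching across the walk; but because later interferences can invalidate earlier accepts, the incremental simulation is unsound, and once you re-simulate fully after each deletion you have recovered the paper's argument.
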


\begin{proof} 
To prove the theorem, we employ the classic technique of parameterized algorithms, namely branching. The idea is to identify a subgraph $H$ of $G$ with at most $m^2$ arcs such that if the principal can motivate the present-biased agent to move over edges of $T$ by removing a set $D$ of at most $k$ arcs, then at least one arc of $D$ should be from $H$.

Consider how the present-biased agent navigates from $s$ to $t$ in graph $G$. If the agent's path includes all arcs from $T$, there is no need for the principal to delete any arc from $G$, so we set $D = \emptyset$. Otherwise, we construct a subgraph $H$ of $G$ as follows.

Let $P_0 = sv_1 v_2 \cdots v_{p}$, $p \leq m$, be the path along which the present-biased agent traverses in $G$ from $s$ to $t$ (perhaps not reaching the vertex $t$ if the agent abandons the project at vertex $v_p$). When standing at a vertex $v_i$, $1\leq i\leq p-1$, the agent evaluates (with a present bias) all possible paths from $v_i$ to $t$. We pick up a path $P_i$ of the minimum perceived cost $\zeta_M(P_i)$ from $v_i$ to $t$. Then we define the graph $H$ as the union of paths $H=\bigcup_{i=0}^{p-1}P_i$. For every $0\leq i\leq p-1$, path $P_i$ has at most $m-i$ arcs. Thus, the number of arcs in subgraph $H$ does not exceed $\sum_{i=0}^m (m -i) \leq m^2$. Since computing the perceived cost of a path could be done in polynomial time \cite{KleinbergO14}, the time required to construct graph $H$ is polynomial in the input size.

Let $D\neq \emptyset$, $|D|\leq k$, be the arcs the principal deletes to achieve his goals. We claim that at least one arc of $D$ is from $H$. Indeed, if this is not the case, then the minimum value $\zeta_M(v_i)$ for each vertex $v_i$ in graph $G-D$ does not change. Hence, if none of the arcs of $H$ are deleted, the agent will traverse $G-D$ along the path $P_0$ and thus will not traverse all arcs from $T$.

This suggests the following branching algorithm. We go through all arcs of $H$. By the above arguments, we know that at least one of the arcs, say $e$, is in $D$. Thus, for the correct guess of the arc $e$, we have that $M = (G, w, s, t, \beta, r)$ with parameter $k$ is a yes-instance if and only if $M' = (G-e, w, s, t, \beta, r)$ with parameter $k-1$ is a yes-instance. In other words, we employ the following branching algorithm:

(i) Compute graph $H$ and branch into $|E(H)|=\Oh (m^2)$ subproblems, corresponding to removing an arc from $G$ and reducing the parameter by $1$.
(ii) Repeat the procedure recursively. That is, in polynomial time, we find a new path of the agent $P'$ in graph $G':=G-e$ and check if it contains all the selected arcs. If yes, then we stop; otherwise, go to step (i).

To analyze the running time of the algorithm, we obtain a branching tree of depth $k$ and arity at most $m^2$, and thus with $\Oh(m^{2k})$ nodes. For each tree node, we compute graph $H$, which is done in time polynomial in $|M|$. Thus, the running time of the algorithm is $\Oh(m^{2k}) \cdot \text{poly}(|M|)$.
\end{proof}

Our second algorithmic result about \tpathdel concerns the limited number of situations when an agent could change a decision. Let us note that the agent could change his mind only when he is on a vertex of some cycle of the underlying undirected graph. The following parameterization concerns the situation when the underlying undirected graph has few edge-disjoint cycles.

A \emph{feedback edge set} of an undirected graph $G$ %($\fes(G)$) 
is the set of edges whose removal makes $G$ acyclic.
For a directed graph $G$, we use $\fes(G)$ to denote the minimum size of a feedback edge set of the underlying undirected graph of $G$. Equivalently, $\fes(G)$ is the \emph{cyclomatic} number of the underlying graph.  Note that if $G$ is weakly connected, that is, the underlying graph of $G$ is connected, then $\fes(G)=|E(G)|-|V(G)|+1$.

We consider kernelization for \tpathdel parameterized by $\fes(G)$.
Recall that a \emph{kernelization} algorithm, given an instance $(x,k)$ of some parameterized problem, runs in polynomial time and outputs an equivalent instance $(x',k')$ of the same problem such that $|x'|,k' \leq f(k)$, for some function $f$. This instance $(x',k')$ is called the \emph{kernel}, and function $f$ is called the size of the kernel. Kernelization is one of the fundamental tools for parameterized algorithms, and it is well known that a problem admits a kernel from parameter $k$ if and only if there exists an \classFPT algorithm for it from this parameter. We refer to books \cite{CyganFKLMPPS15,kernelizationbook19} for further expositions of kernelization. In particular, our kernelization algorithms implies that \tpathdel is \classFPT parameterized by $\fes(G)$. 

It is convenient to work with the more general variant of the problem, called \mtpathdel, where promised rewards for distinct vertices may be distinct. Formally, we consider time-inconsistent planning models $M = (G, w, s, t,   \beta, r)$ where $r\colon V(G)\rightarrow \mathbb{Q}_{\geq 0}$. In this variant of the model, the agent occupying a vertex $v$ abandons the project if 
$\zeta_M(v)>\beta\cdot r(v)$. Note that \tpathdel is a special case of \mtpathdel where $r(v)=r$. 

The size of our kernel will depend on $\fes(G)$  only. In particular, it would be independent of the sizes of weights and reward. To obtain such a kernel, we have to compress weights and rewards. 
For this, we use the approach proposed by Etscheid et al.~\cite{DBLP:journals/jcss/EtscheidKMR17} that is based on the result of Frank and Tardos~\cite{DBLP:journals/combinatorica/FrankT87}.

\begin{proposition}[\cite{DBLP:journals/combinatorica/FrankT87}]\label{prop:FT}
 There is an algorithm that, given a vector $\mathsf{w}\in \mathbb{Q}^d$ and an integer~$N$, in polynomial time finds a vector $\overline{\mathsf{w}}\in \mathbb{Z}^d$ with $\|\overline{\mathsf{w}} \|_{\infty}\leq 2^{4d^3}N^{d(d+2)}$ such that 
 $\mathsf{sign}(\mathsf{w}\cdot b)=\mathsf{sign}(\overline{\mathsf{w}}\cdot b)$ for all vectors $b\in\mathbb{Z}^d$ with $\|b\|_1\leq N-1$.
\end{proposition}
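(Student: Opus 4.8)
The statement is the Frank--Tardos theorem~\cite{DBLP:journals/combinatorica/FrankT87}, which the paper invokes as a black box; the outline below is the route I would take to reprove it. First I would clear denominators: replacing $\mathsf{w}$ by a positive integer multiple of itself changes neither $\mathsf{sign}(\mathsf{w}\cdot b)$ for any $b$ nor the conclusion, so I may assume $\mathsf{w}\in\mathbb{Z}^d$. The task is then to compress a possibly huge integer vector into one of $\ell_\infty$-norm at most $2^{4d^3}N^{d(d+2)}$ while preserving, for every $b\in\mathbb{Z}^d$ with $\|b\|_1\le N-1$, both the open half-space membership of $b$ (the case $\mathsf{w}\cdot b\neq 0$) and the equality $\mathsf{w}\cdot b=0$ (a ``tie''). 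Perturbing the direction of $\mathsf{w}$ slightly is harmless for the first; protecting the second is the heart of the argument.

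The single tool is the LLL-based simultaneous Diophantine approximation lemma: given $\alpha\in\mathbb{Q}^d$ and $0<\varepsilon<1$, one computes in polynomial time integers $p\in\mathbb{Z}^d$ and $q\ge 1$ with $q\le 2^{d(d+1)/4}\varepsilon^{-d}$ and $\|q\alpha-p\|_\infty\le\varepsilon$. Applying it with $\alpha$ a suitably normalised copy of $\mathsf{w}$ and $\varepsilon$ of order $1/N$, one obtains an integer vector $p\approx q\alpha$ pointing in nearly the same direction as $\mathsf{w}$. For $b$ with $\|b\|_1\le N-1$ and $\mathsf{w}\cdot b\neq 0$ one has $|\mathsf{w}\cdot b|\ge 1$ since $\mathsf{w}$ is integral, while $|(p-q\alpha)\cdot b|\le\|p-q\alpha\|_\infty\|b\|_1<1$ for the chosen $\varepsilon$; hence $\mathsf{sign}(p\cdot b)=\mathsf{sign}(q\alpha\cdot b)=\mathsf{sign}(\mathsf{w}\cdot b)$, so half-space memberships are preserved.

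The delicate point, and the step I expect to be the main obstacle, is the case $\mathsf{w}\cdot b=0$: a single generic approximation need not give $p\cdot b=0$. Frank and Tardos resolve this by a recursion over the coordinates: approximate, then rescale so that the contribution of the processed coordinate is separated from the remaining ones by a margin too large for any integer combination of $\ell_1$-norm below $N$ to create or destroy a tie, and recurse on the remaining $d-1$ coordinates. Each of the $d$ rounds multiplies the size bound by roughly $2^{O(d^2)}N^{O(d)}$; telescoping over the $d$ rounds and tracking constants carefully yields exactly $\|\overline{\mathsf{w}}\|_\infty\le 2^{4d^3}N^{d(d+2)}$. Since the whole procedure makes a bounded number of LLL calls on inputs of polynomial bit length, it runs in polynomial time. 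Thus the real work is the bookkeeping of the exponents through the recursion, ensuring the separation is wide enough to protect ties while keeping the final bound at $4d^3$ and $d(d+2)$; everything else is the standard approximation-plus-triangle-inequality routine.
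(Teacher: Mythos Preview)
The paper does not prove this proposition at all: it is stated with a citation to Frank and Tardos and used purely as a black box in the kernelization argument of \Cref{fes-for-TPD}. So there is no ``paper's own proof'' to compare against. Your proposal correctly identifies this, and the sketch you give---clear denominators, use LLL-based simultaneous Diophantine approximation to preserve strict sign patterns, then recurse over coordinates to protect the ties $\mathsf{w}\cdot b=0$, with the exponent bookkeeping producing the stated bound---is indeed the shape of the original Frank--Tardos argument. For the purposes of this paper, however, no such reproof is expected or needed; the proposition is simply imported.
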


\begin{theorem}\label{fes-for-TPD}
   There is a polynomial-time algorithm that, given an instance of  \mtpathdel, outputs an equivalent instance where the graph has at most $8\fes(G) + 3$ vertices and at most $9\fes(G)+2$ arcs. Moreover, if the weights and rewards are rational, and $\beta$ is a rational constant that is not a part of the input, then 
   the \mtpathdel problem admits a polynomial kernel when parameterized by the size of a  feedback edge set of the input graph. 
\end{theorem}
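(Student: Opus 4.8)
The plan is to reduce the instance in two stages: first compress the \emph{structure} of the graph so that the number of vertices and arcs depends only on $\fes(G)$, and then compress the \emph{numbers} (weights and rewards) using Proposition~\ref{prop:FT}. For the structural part, let $F$ be a minimum feedback edge set of the underlying undirected graph, so $|F|=\fes(G)$; removing $F$ yields a forest. The crucial observation is that a tree has at most $|F|$-ish ``branching vertices'' once we also mark the endpoints of the arcs in $F$, the source $s$, the target $t$, and the endpoints of the prescribed arcs in $T$. (One should first argue $|T|\le \fes(G)$ or otherwise handle $T$-arcs that lie on the unique forest path; an arc of $T$ lying on a degree-$2$ path can be contracted together with its incident subdivision vertices.) Between two consecutive marked vertices the graph induces a directed path all of whose internal vertices have in-degree and out-degree $1$. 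Such a path can be replaced by a single arc whose weight is the sum of the weights along it, provided we are careful that the agent never abandons the project, nor changes his mind, strictly inside such a path: since internal vertices have a unique outgoing arc, the agent's only choice there is to proceed or abandon, and the abandonment condition along the path is governed by the perceived cost, which is monotone as one moves toward $t$ only up to the $\beta$-discounting of the first edge. This is exactly where the generalization to \mtpathdel with per-vertex rewards is needed: after contracting a path into one arc we must record, as the reward $r(v)$ of the surviving endpoint, a value that faithfully encodes whether the agent would have abandoned at \emph{any} internal vertex of the contracted path. Concretely, $r(v)$ becomes the minimum over internal vertices $u$ of the old path of the quantity that makes $\zeta_M(u)\le\beta r(u)$ tight, re-expressed relative to $v$; one checks that with this bookkeeping the contracted instance is equivalent. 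A standard counting argument (a forest with $b$ marked vertices has at most $2b-2$ maximal degree-$2$ paths, and $b\le 2\fes(G)+|\{s,t\}|+\dots$) then gives the stated bounds $8\fes(G)+3$ vertices and $9\fes(G)+2$ arcs.

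For the numerical part, once the graph is small we apply Proposition~\ref{prop:FT}. Here I would collect into one rational vector $\mathsf{w}$ all arc weights together with all vertex rewards (the latter suitably scaled, since $\beta$ is a fixed rational constant we may clear its denominator). The point is that every decision the agent makes---comparing perceived costs $\zeta_M(P)$ of two $v$-$t$ paths, or comparing $\zeta_M(v)$ against $\beta r(v)$---is the sign of an inner product of $\mathsf{w}$ with an integer vector whose $\ell_1$-norm is bounded: the coefficients are $0,1$, or $\beta$-related integers times the number of arcs on a path, and the number of arcs is now $O(\fes(G))$. Choosing $N$ to be a polynomial in $\fes(G)$ that dominates all these $\ell_1$-norms, Proposition~\ref{prop:FT} produces integer weights and rewards of bit-length polynomial in $\fes(G)$ that preserve the outcome of every such comparison, hence preserve the entire trajectory of the agent for every choice of deletion set $D$. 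Since $k$ may be assumed at most the number of arcs (otherwise the instance is trivial), the whole instance now has size polynomial in $\fes(G)$, giving the polynomial kernel; and as kernelizability is equivalent to being \classFPT, the last sentence follows.

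The main obstacle I expect is the path-contraction equivalence in the structural step: one must verify that contracting a degree-$2$ path to a single weighted arc, with the rewards repackaged as above, changes neither (a) whether the agent abandons, for \emph{every} possible deletion set $D$, nor (b) the perceived cost, as seen from vertices outside the path, of every $s$-$t$ path routed through it. Case (b) is easy because $\zeta_M$ only ever $\beta$-discounts all-but-the-first arc, so summing interior arc weights is faithful as long as the path is entered at its first arc---which is guaranteed since interior vertices are not branching. Case (a) is the delicate one, because abandonment at an interior vertex $u$ depends on $\zeta_M(u)$ in the \emph{current} graph $G-D$, and deletions elsewhere can raise $\zeta_M(u)$; one has to show the repackaged reward $r(v)$ captures this uniformly. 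I would handle it by observing that for an interior vertex $u$ at distance-sum $c_u$ (along the path, toward $v$) from $v$, one has $\zeta_M(u)=\beta c_u' + (\text{a term equal to }\zeta\text{ evaluated from }v\text{, shifted})$ in $G-D$, so the abandonment test at $u$ becomes a test at $v$ with a shifted threshold; taking the worst (smallest-slack) interior vertex gives the single reward value to store. Handling prescribed arcs $T$ that happen to sit on contracted paths, and the tie-breaking rule for equal perceived costs after contraction, are additional bookkeeping points but not genuine obstacles.
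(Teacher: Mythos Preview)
Your two-stage plan (structural compression via path contraction, then Frank--Tardos for the numbers) is exactly the paper's approach, and your handling of the numerical part is essentially identical to what the paper does. The concern about bounding $|T|$ is unnecessary: as the paper does, one simply merges any $T$-arcs lying on a contracted subpath into the new arc, so $|T|$ never obstructs the count.

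There is, however, a real gap in your structural step. You propose to contract each maximal degree-$2$ path to a single arc and store the repackaged abandonment threshold as the reward $r(v)$ of the surviving start endpoint $v$. This is only correct when $d_{out}(v)=1$. If $v$ is a branching vertex with another outgoing arc, the agent at $v$ may choose that other arc; then the abandonment conditions at the (now-deleted) interior vertices are irrelevant, yet you have lowered $r(v)$ to encode them, possibly causing a spurious abandonment at $v$. Your proposed fix (``the abandonment test at $u$ becomes a test at $v$ with a shifted threshold'') tacitly assumes the agent's perceived cost at $v$ is computed along the contracted path, which fails precisely when $d_{out}(v)>1$. The paper's Rule~3 contracts a single vertex $y$ in a path $xyz$ only when $d_{out}(x)=1$ and $d_{in}(y)=d_{out}(y)=1$, updating $r(x):=\min\{r(x)+\tfrac{1-\beta}{\beta}w(yz),\,r(y)+\tfrac{1}{\beta}w(xy)\}$. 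After exhausting this rule one is left not with a fully contracted graph but with one having no two \emph{adjacent} degree-$2$ vertices (distinct from $s,t$); the counting argument then yields the constants $8\fes(G)+3$ and $9\fes(G)+2$. So the fix is easy---retain one internal vertex next to each branching start---but without it your contraction is not equivalence-preserving.
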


% \todo[inline]{you introduce the shorthand $f$ for $fes(G)$, but then keep using $fes(G)$ throughout the proof.}

\begin{proof}
    Let $(M, k, T)$ be an instance of \mtpathdel with $M = (G, w, s, t,   \beta, r)$. 
    Let also $f=\fes(G)$.
    We apply the following reduction rules.

    \noindent
    \textbf{Rule 1.} If there is $v \in V(G) \setminus \{s, t\}$ with $d_{in}(v) = 0$ or $d_{out}(v) = 0$, then set $G:=G-v$. Furthermore, if $v$ is incident to an arc from $T$, stop and return a trivial no-instance of \mtpathdel.

    The rule is \emph{safe}, that is, it returns an equivalent instance of the problem because $v$ with  $d_{in}(v) = 0$ or $d_{out}(v) = 0$ cannot be involved in any $s$-$t$ path or agent's evaluation. We apply Rule~1 exhaustively. The next rule is trivially safe. 

    \noindent
    \textbf{Rule 2.} If $t$ is not reachable from $s$ then stop and return a trivial no-instance. 
    
    Notice that if we did not stop after applying the rules then $s$ and $t$ are unique source and target, respectively, of $G$. In particular, for each vertex $v$, $v$ is reachable from $s$, and $t$ is reachable from $v$.  The next rule is crucial for kernelization.

    \noindent
    \textbf{Rule 3.} If $G$ has a path $xyz$ such that $d_{out}(x)=1$ and $d_{in}(y)=d_{out}(y)=1$ then
    \begin{itemize}
        \item delete $y$ and add an arc $xz$,
        \item set $w(xz):=w(xy)+w(yz)$,
        \item if $T\cap\{xy,yz\}\neq\emptyset$ then set $T:=(T\setminus\{xy,yz\})\cup\{xz\}$,
        \item set $r(x):=\min\{r(x)+\frac{1-\beta}{\beta}w(yz),r(y)+\frac{1}{\beta}w(xy)\}$.
    \end{itemize}

    To argue that the rule is safe, assume that the instance $(M', k, T')$ is obtained by the application of the rule from  $(M, k, T)$ and denote by $w'$ and $r'$ the obtained weight and reward functions. We claim that the instances are equivalent.

    For the forward direction, assume that $(M,k,T)$ is a yes-instance. Then there is a set of arcs $D$ of size at most $k$ such that after removing $D$ from $G$, the present-biased agent follows a \tpath $P$. We define $D'=(D\setminus \{xy,yz\})\cup\{xz\}$ if $\{xy,yz\}\cap D\neq \emptyset$, and we set $D'=D$ otherwise. Note that $|D'|\leq |D|\leq k$. We claim that $D'$ is a solution to $(M', k, T')$. The claim is trivial if $P$ does not contain $x$ because, in this case, $xy,yz\notin E(P)$. Assume this is not the case and $x\in V(P)$. Let $Q$ be the $x$-$t$ subpath of $P$. Because $d_{out}(x)=1$ and $d_{in}(y)=d_{out}(y)=1$, $xyz$ is a prefix of $Q$. Because the agent does not abandon the project, $\zeta_M(x)\leq\beta\cdot r(x)$ and $\zeta_M(y)\leq \beta\cdot r(y)$. 
    Suppose that $r'(x)=r(x)+\frac{1-\beta}{\beta}w(yz)$. 
    Then  $\zeta_{M'}(x)=\zeta_M(x)+(1-\beta)w(yz)
    \leq \beta\cdot r'(x)$. If  $r'(x)=r(y)+\frac{1}{\beta}w(yz)$ then 
    $\zeta_{M'}(x)=\zeta_M(y)+w(xy)\leq \beta\cdot r'(x)$.  
    Therefore, the agent occupying $x$ would not abandon the project in the modified graph. Thus, the agent would follow the path $Q'$ obtained from $Q$ by the replacement of $xyz$ by $xz$. This implies that the path $P'$ obtained from $P$ by the replacement of $xyz$ by $xz$ is a $T'$-path in $G'-D'$ in the modified instance, and the agent should follow it. We conclude that  $(M', k, T')$ is a yes-instance.

     For the opposite direction, assume that $(M', k, T')$ is a yes-instance and denote by $D'$ a set of arcs of $G'$ of size at most $k$  such that after removing $D'$ from $G'$, the present-biased agent follows a $T'$-path $P'$. If $xz\in D'$, we set $D=(D'\setminus\{xz\})\cup\{xy\}$, and we set $D=D'$ otherwise. By the definition, $|D|=|D'|=k$. We claim that $D$ is a solution to $(M, k, T)$. Similarly to the proof for the forward direction, the claim is trivial if $P'$ does not contain $x$. Let assume that $x\in V(P')$. Then $xz\in E(P')$. Denote by $Q'$ the $x$-$t$ subpath of $P'$. Since the agent follows $Q'$, 
     $\zeta_{M'}(x)\leq\beta\cdot r'(x)$. Because $r'(x)\leq r(x)+\frac{1-\beta}{\beta}w(yz)$,
     $\zeta_{M}(x)=\zeta_{M'}(x)-(1-\beta)w(yz)\leq \beta\cdot r(x)$. Hence, the agent occupying $x$ in $G$ would not abandon the project and go to $y$. Further, we have that
     $\zeta_M(y)=\zeta_{M'}(x)-w(xy)$. Because $r'(x)\leq r(y)+\frac{1}{\beta}w(xy)$, $\zeta_M(y)\leq \beta\cdot r(y)$. Therefore,the agent occupying $y$ in $G$ would go to $z$.
     We obtain that the agent occupying $x$ in $G$ would follow the path obtained from $Q'$ by replacing of $xz$ by $xyz$. This implies that the path $P$ obtained from $P'$ by the replacement of $xz$ by $xyz$ is a $T$-path in $G-D$, and the agent should follow it. 
     Thus, $(M, k, T)$ is a yes-instance. This concludes the proof that the rule is safe.

     Rule~3 is applied exhaustively whenever possible. Assume from now that Rules~1, 2, and 3 cannot be applied to  $(M, k, T)$.
     % with $M = (G, w, s, t,   \beta, r)$. 
     Observe that the rules cannot increase the feedback edge set of the underlying graph, that is, $\fes(G)\leq f$. We show the following claim.

     \begin{claim}\label{cl:size}
     $|V(G)|\leq 8\fes(G)+3$ and $|E(G)|\leq 9\fes(G)+2$.
     \end{claim}

    \begin{proof}[Proof of Claim~\ref{cl:size}]
    Denote by $H$ the underlying undirected graph of $G$.
    
    We observe that $H$ has no adjacent vertices of degree two in $V(H)\setminus \{s,t\}$. 
    To see this, assume that $x$ and $y$ are adjacent vertices of degree two distinct from $s$ and $t$. We assume without loss of generality that $xy\in E(G)$. 
    Notice that $d_{in}(x)\neq 0$ in $G$ because of Rule~1. Hence, $d_{out}(x)=1$. Similarly, $d_{out}(y)=1$. Because $G$ is acyclic and $y\neq t$, vertex $y$ has a neighbor $z$ distinct from $x$. However, this means that we would be able to apply Rule~3 for the path $xyz$, contradicting our assumptions that the rules cannot be applied. 
    
    Let $F$ be a set of edges of size $\fes(G)$ such that $R=H-F$ is acyclic. Because $G$ is weakly connected, $R$ is a tree. 
    Denote by $X$ the set containing $s$, $t$, and the endpoints of the edges of $F$. Note that $|X|\leq 2\fes(G)+2$. Observe that all the leaves of $R$ are in $X$ because of Rule~1. It is a folklore observation that a tree with $\ell$ leaves has at most $\ell-2$ vertices of degree at least three. Thus, $R$ has at most $2\fes(G)$ vertices $v\in V(H)\setminus X$ of degree at least three. The degrees of vertices in $V(H)\setminus X$ are the same in $H$ and $R$. Hence, $H$ has at most $2\fes(G)$ vertices of degree at least three outside $X$.    
    By our observation that $H$ has no adjacent vertices of degree two distinct from $s$ and $t$, we obtain that $H$ has at most $4\fes(G)+1$ vertices  $v\in V(H)\setminus X$ of degree two because $R$ is a tree. Therefore, the total number of vertices of $G$ is at most $8\fes(G)+3$. Because $R$ has at most $8\fes(G)+2$ edges, the number of arcs of $G$ is at most $9\fes(G)+2$. This concludes the proof.     
    \end{proof}

    Since Rules~1, 2, and 3 can be applied in polynomial time, \Cref{cl:size} concludes the proof of the first part of the theorem. 

    To show the second claim, assume that the weights and rewards are rational and $\beta=p/q$ is a constant. Consider the vector $\mathsf{w}\in \mathbb{Q}^d$ for $d=|V(G)|+|E(G)|$ whose elements are the values of the reward function $r$ for the vertices of $G$ and the weights of arcs. We define $N=d\max\{p,q\}-1$. Then we apply \Cref{prop:FT}. The algorithm outputs a vector $\overline{\mathsf{w}}\in \mathbb{Z}^d$ and we replace the rewards and the weights by the corresponding values of the elements of $\overline{\mathsf{w}}$. We have that $\mathsf{sign}(\mathsf{w}\cdot b)=\mathsf{sign}(\overline{\mathsf{w}}\cdot b)$ for all vectors $b\in\mathbb{Z}^d$ with $\|b\|_1\leq N-1$. In particular, the equality holds for vectors $b$ whose elements are $0,\pm p, q$. This implies that the replacements of the rewards and weights create an equivalent instance. Because the rewards and weights are upper-bounded by $2^{4d^3}N^{d(d+2)}$ and $d=\Oh(\fes(G))$, we obtain that each numerical parameter can be encoded by a string of length $\Oh(\fes(G)^3)$. We conclude that the algorithm outputs an instance of \mtpathdel of size $\Oh(\fes(G)^4)$. This means that we have a polynomial kernel. This completes the proof.    
\end{proof}

In the second part of \Cref{fes-for-TPD}, we assume that $\beta$ is a rational constant that is not a part of the input. However, it can be observed that the claim holds if $\beta=p/q$ for integers $p,q\leq 2^{\fes(G)^c}$ for some constant $c$. Also, we note that because \tpathdel is \classNP-complete for rational weights and any rational positive constant $\beta<1$, any problem from \classNP{} can be reduced to \tpathdel in polynomial time. This implies the following corollary.

\begin{corollary}
If the weights are rational and $\beta$ is a rational constant which is not a part of the input, then \tpathdel admits a polynomial kernel when parameterized by the size of a  feedback edge set of the input graph. 
\end{corollary}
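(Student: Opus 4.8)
The plan is to obtain the corollary as a consequence of \Cref{fes-for-TPD}, using that \tpathdel is exactly the special case of \mtpathdel in which the reward function is constant. Starting from an instance $(M,k,T)$ of \tpathdel with $M=(G,w,s,t,\beta,r)$, I would first reinterpret it as an instance of \mtpathdel by setting $r(v):=r$ for all $v\in V(G)$, and then run the polynomial-time algorithm of \Cref{fes-for-TPD}. Since $\beta$ is a fixed rational constant (not part of the input), this yields an equivalent instance of \mtpathdel whose underlying graph has $\Oh(\fes(G))$ vertices and arcs and in which every weight and every reward is an integer of bit-length $\Oh(\fes(G)^3)$; in particular the whole instance has size $\Oh(\fes(G)^4)$.

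What remains is to transform this compressed \mtpathdel instance back into a \tpathdel instance of comparable size. For this I would invoke the remark preceding the corollary: for every fixed rational constant $\beta\in(0,1)$, \tpathdel (with this value of $\beta$ built in) is \classNP-hard with rational weights, and it also lies in \classNP, because a candidate deletion set $D$ with $|D|\le k$ is a certificate whose correctness --- that the present-biased agent traverses a \tpath in $G-D$ --- is checkable in polynomial time by computing perceived costs along the agent's walk \cite{KleinbergO14}. The more general \mtpathdel with the same fixed $\beta$ is in \classNP for precisely the same reason, so there is a polynomial-time many-one reduction from \mtpathdel to \tpathdel. Composing the kernelization of \Cref{fes-for-TPD} with this reduction gives, in total polynomial time, an equivalent instance of \tpathdel whose size --- and hence also the size of a feedback edge set of its graph --- is bounded by a polynomial in $\fes(G)$. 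That is the desired polynomial kernel.

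The one place that requires care, and the only real obstacle, is the second step: a polynomial kernel for the more general \mtpathdel does not by itself produce one for the restricted \tpathdel, since the reduction rules of \Cref{fes-for-TPD} (in particular Rule~3) genuinely create vertex-dependent rewards even when started from a constant reward; routing the compressed instance back through the \classNP-completeness of \tpathdel is exactly what repairs this. The degenerate value $\beta=1$, where the agent is rational, falls outside the cited \classNP-hardness and is not needed for the statement, so we may restrict attention to the present-biased regime $\beta<1$ as in that remark.
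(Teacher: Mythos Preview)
Your proposal is correct and follows essentially the same route as the paper: the corollary is deduced directly from \Cref{fes-for-TPD} together with the remark preceding it, namely that \tpathdel is \classNP-complete for rational weights and any fixed rational $\beta<1$, so the compressed \mtpathdel instance can be reduced back to \tpathdel in polynomial time. Your write-up is in fact more explicit than the paper's one-line justification, correctly spelling out why \mtpathdel lies in \classNP{} and why the composition yields a polynomial kernel; the caveat about $\beta=1$ is a fair observation that the paper leaves implicit.
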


Also, we can solve \tpathdel in \classFPT time using the algorithm from \Cref{fes-for-TPD}---we reduce an instance of \tpathdel to an equivalent instance of \mtpathdel with a graph of bounded size and guess a solution. 

\begin{corollary}
    \tpathdel is solvable in $2^{\Oh(\fes(G))}\cdot |M|^{\Oh(1)}$ time.
\end{corollary}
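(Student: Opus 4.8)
The plan is to bootstrap from \Cref{fes-for-TPD}. Given an instance $(M,k,T)$ of \tpathdel with $M=(G,w,s,t,\beta,r)$, I would first reinterpret it as an instance of the more general \mtpathdel by taking the reward function to be the constant~$r$ on every vertex; this does not change the answer. I would then run the polynomial-time reduction algorithm from the \emph{first} part of \Cref{fes-for-TPD}. The key point is that this part of the statement is unconditional: it does not require $\beta$ to be a constant, and it produces an equivalent \mtpathdel instance $(M',k,T')$ with $M'=(G',w',s',t',\beta,r')$ in which $G'$ has at most $8\fes(G)+3$ vertices and at most $9\fes(G)+2$ arcs. As observed inside the proof of \Cref{fes-for-TPD}, the reduction rules never increase the feedback edge number, and the new weights and rewards produced by Rule~3 remain rationals whose bit-length stays polynomial in $|M|$, so $|M'|=|M|^{\Oh(1)}$.

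Next I would solve the small instance by brute force. Since $|E(G')|\le 9\fes(G)+2$, there are at most $2^{|E(G')|}=2^{\Oh(\fes(G))}$ subsets $D\subseteq E(G')$. For each such $D$ with $|D|\le k$, form $G'-D$ and simulate the behavior of the present-biased agent: starting at $s'$, at each visited vertex $v$ compute $\zeta_{M'}(v)$ and a perceived $v$-$t'$ path, which is done in polynomial time by the dynamic programming of Kleinberg and Oren~\cite{KleinbergO14} on the DAG $G'-D$; the agent either abandons the project (when $\zeta_{M'}(v)>\beta\cdot r'(v)$) or advances along the first arc of the chosen path, using the fixed tie-breaking rule. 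This walk has length at most $|V(G')|$, so the whole simulation takes $|M|^{\Oh(1)}$ time. Declare $(M',k,T')$ a yes-instance if and only if for some $D$ with $|D|\le k$ the agent reaches $t'$ along a $T'$-path.

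Combining the two steps, the reduction runs in polynomial time and the enumeration tries $2^{\Oh(\fes(G))}$ candidate deletion sets, each checked in polynomial time, giving a total running time of $2^{\Oh(\fes(G))}\cdot|M|^{\Oh(1)}$; by the equivalence in \Cref{fes-for-TPD} and the fact that \tpathdel is the special case of \mtpathdel with constant reward, this decides the original instance. There is essentially no genuine obstacle; the only points that need care are that one must invoke the unconditional first half of \Cref{fes-for-TPD} (so that $\beta$ may stay part of the input), and that arithmetic on the compressed—but still possibly large—numbers of $M'$ stays polynomial, which holds because each weight of $G'$ is a sum of at most $|E(G)|$ original weights and each reward of $G'$ is obtained from the $r(\cdot)$ and $w(\cdot)$ values by $\Oh(|V(G)|)$ applications of an affine map with coefficients among $1$, $\tfrac{1}{\beta}$, and $\tfrac{1-\beta}{\beta}$.
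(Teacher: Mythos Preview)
Your proposal is correct and follows essentially the same approach as the paper, which simply states that one reduces to an equivalent \mtpathdel instance with a graph of bounded size via \Cref{fes-for-TPD} and then guesses a solution. Your write-up is in fact more careful than the paper's one-sentence justification: you explicitly invoke the unconditional first half of \Cref{fes-for-TPD} (so $\beta$ may remain part of the input) and you verify that the weights and rewards produced by repeated applications of Rule~3 keep polynomial bit-length, points the paper leaves implicit.
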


%\todo[inline]{What happens to vertex cover? XP. FPT, KErnel?
%Feedback vertex set}

\section{Motivate by Addition}
In this section, we show that the \tpathadd problem is computationally hard with respect to the number of edges added even on the simplest type of instances 
when the initial graph is a path whose edges form $T$ and only detours are allowed to be added---edges whose start and end belong to the path. In this case, we assume that all the arcs we add go from left to right.
We will call such inputs a \emph{path with detours}. 
% We omit the proof of \Cref{thm-hard-addition} in this section due to space restrictions, please consult the supplementary material.
To prove the result, we need the following problem:

\defparproblem{\textsc{Modified $k$-Sum}}{Sets of positive integers $X_1, X_2, \dots, X_k$ and integer $Z$.}{$k$.}{Decide whether there is $x_1\in X_1$, $x_2\in X_2$, $\dots,$ $x_k\in X_k$ such that $x_1+\cdots+x_k= Z$.}

It is known~\cite{AAAI22_DFI} that this problem is \classW1-hard parameterized by $k$.

\begin{theorem}\label{thm-hard-addition}
The \tpathadd problem on the path with detours instances is \classW1-hard parameterized by $k$.
\end{theorem}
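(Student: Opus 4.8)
The plan is to give a parameterized reduction from \textsc{Modified $k$-Sum}, which is \classW1-hard parameterized by $k$, to \tpathadd restricted to path-with-detours instances, keeping the parameter equal to $k$. Note first that since $T$ is the edge set of the whole path $P$ and $P$ is Hamiltonian, the only $s$-$t$ path containing all of $T$ is $P$ itself, so ``the agent follows a \tpath'' here just means ``the agent traverses $P$ edge by edge and never abandons the project''. Given an instance $(X_1,\dots,X_k,Z)$ of \textsc{Modified $k$-Sum}, I would build $P$ as a concatenation of a short head, $k$ consecutive \emph{choice gadgets} $\Gamma_1,\dots,\Gamma_k$, and a short tail, together with a set $A$ of candidate detour arcs, all oriented left to right. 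Gadget $\Gamma_i$ spans a heavy sub-path and offers, for every $x\in X_i$, one candidate arc $d_{i,x}\in A$ that shortcuts $\Gamma_i$; the weight of $d_{i,x}$ is chosen so that adding it lowers, by an amount encoding $x$, the \emph{anticipated} cost of the remaining route as seen by the agent at every vertex before $\Gamma_i$, while $d_{i,x}$ itself is never the first edge of a minimum-perceived-cost path (here one exploits the gap between being short enough to help when only anticipated, where it is discounted by $\beta$, and being too long to actually be traversed, where its full weight is charged). The budget is set to $k$, and the weights inside each $\Gamma_i$ are heavy enough that, if no detour is placed in $\Gamma_i$, the agent already abandons before reaching $\Gamma_i$; hence every solution adds exactly one detour per gadget, i.e.\ it picks one element $x_i\in X_i$ in each $\Gamma_i$. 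Finally, the head of $P$ and the reward $r$ are tuned so that the agent standing at $s$ does not abandon iff the total anticipated saving from the chosen detours meets a fixed threshold, i.e.\ iff $\sum_i x_i$ lies on one side of $Z$; and a second comparison in the construction — a vertex at which a dedicated arc becomes the agent's preferred move (steering it off $P$) precisely when the chosen detours oversave — pins $\sum_i x_i$ to the other side of $Z$. Together these give: the agent traverses all of $P$ iff $\sum_i x_i = Z$.

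For correctness I would argue both directions. Forward: given a solution $x_1,\dots,x_k$, add the arcs $d_{1,x_1},\dots,d_{k,x_k}$ and check, walking along $P$, that at each vertex the minimum-perceived-cost path starts with the prescribed edge (so the agent stays on $P$, ties broken consistently by the fixed topological order) and has perceived cost at most $\beta r$ (so it never abandons); both facts hold exactly because $\sum_i x_i = Z$. Backward: by the abandonment argument, any solution places one detour in each $\Gamma_i$, and then the no-abandonment condition at $s$ together with the non-deviation condition at the checkpoint vertex force the chosen elements to sum to $Z$.

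I expect the main obstacle to be the weight engineering: one has to arrange that one quantity relevant to the agent's decisions is monotone increasing and another monotone decreasing in the chosen elements, so that the set of ``good'' selections is exactly $\{(x_i)\colon \sum_i x_i = Z\}$, while simultaneously keeping the agent's behaviour fully controlled at every vertex of $P$ — the perceived cost must always stay at most $\beta r$ and always be attained by the prescribed edge rather than by an added detour, and a mismatched selection must be shown to derail the agent at a single identified vertex. The remaining parts — bounding the size of the construction, checking that the resulting instance is indeed a path with left-to-right detours, and verifying that the reduction runs in polynomial time with parameter $k$ — are routine.
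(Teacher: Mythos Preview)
Your proposal is correct and follows essentially the same approach as the paper's proof: a parameterized reduction from \textsc{Modified $k$-Sum}, with one gadget per set $X_i$, detour arcs that lower the \emph{perceived} future cost but are never themselves traversed (exploiting the $\beta$-gap between anticipated and immediate cost), and a two-sided pinning of $\sum_i x_i$ to $Z$ via an abandonment threshold at $s$ on one side and a deviation at a designated checkpoint vertex on the other. The paper realises your ``dedicated arc'' as two auxiliary detours $v_2v_4$ and $v_3t$ placed near the head of the path (also supplied in $A$, with parameters tuned so that any solution is forced to include them), and then carries out explicitly the weight engineering you flag as the main obstacle; structurally the two arguments coincide.
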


\begin{proof}
We construct a parameterized reduction of the \textsc{Modified k-Sum} problem to the \tpathadd problem.
 Let $X_1, X_2, \dots, X_k$ and $Z$ be an instance of the \textsc{Modified k-Sum} problem.  We transform the input such that all elements $ x_j \in X_i$ are  in the interval $[b, 2b]$. For that  we assign $b = \underset{i = 1, \dots, k}{\max}\ \underset{x_j\in X_i}{\max}x_j$,  add $b$ to all elements $x_j$, and set $Z:=Z+kb$.

Then we construct an instance of \tpathadd.
\begin{itemize}
    \item Parameter $k$ is  unchanged.
    \item Graph $G$ is the path on $2k+4$ vertices $v_1 v_2 v_3 \dots v_{2k+4}$. We assign a reward $r$ to the vertex $t = v_{2k+4}$ (see Fig.~\ref{path_reduction}).
    \item We set additional arcs $A = \bigcup_{i=1}^k X_i' \bigcup \{v_2v_4, v_3t\}$, where $X_i'$ is the set of multiple arcs $v_{2i+2} v_{2i+4}$, and the weights of these arcs are numbers from the set $X_i$. The weight of arc $v_2 v_4$ is $1$ and the weight of arc $v_3 t$ is $Z + \frac{1}{\beta} - 1$.
\end{itemize}

   \begin{figure}[ht]
        \center{\includegraphics[scale=0.33]{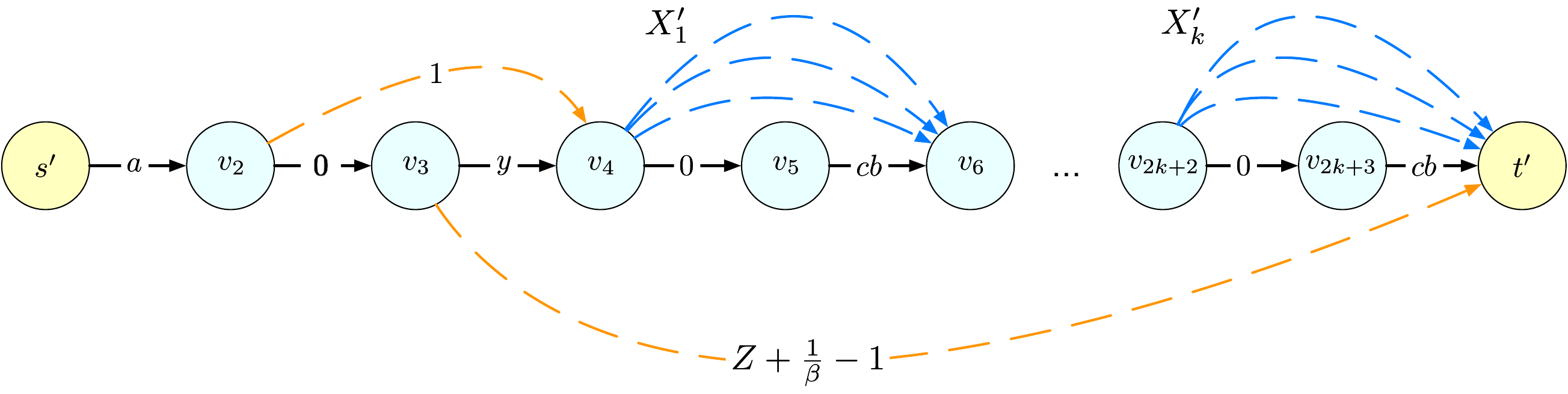}}
		\caption{Construction of the graph for reduction in Theorem~\ref{thm-hard-addition}.}
		\label{path_reduction}
	\end{figure}
 
To make it easier to describe, let us first assume that our input graph $G$ is not just a path but a path with two additional arcs $\{v_2 v_4, v_3 t\}$ (colored in green in the Fig.~\ref{path_reduction}). We will show how to build a reduction for this type of input, and then we will explain why it is possible not to add these two arcs to the path, but to give them in a set of additional arcs $A$.
Note that we take the values of $k, b$ and $Z$ from the input of the \textsc{Modified k-Sum} problem. We will select the parameters $a, c, \beta, r, y$ during the proof.

Initially, we want that the agent in graph $G$ is not motivated  even to leave the start vertex $s=v_1$. For this we put
 \begin{equation}\label{system:eq}\begin{cases}
a + \beta \cdot (1 + k \cdot c \cdot b) > \beta \cdot r, \\
a + \beta \cdot (y + k \cdot c \cdot b) > \beta \cdot r, \\
a + \beta \cdot (Z+\dfrac{1}{\beta} - 1) > \beta \cdot r.
\end{cases}
     \end{equation}
For our purposes, we will consider only the values of $y\geq \frac{1}{\beta} > 1$. Therefore, in \eqref{system:eq}, it is sufficient to satisfy only the first and the third inequalities.
To motivate the agent to move from vertex $s$, it is necessary to add some path from the vertex $v_4$ to $t$ along the arcs from $\{X_i\}_{i=1}^k$.  We denote this  path by $\widehat P$ and its cost, that is the sum of the costs of its arcs,   by $w(\widehat P)$.
%\todo[inline]{Replace c with w. Maybe for the weight and cost function we will use letter w? Then we should make a remark, that function w is additive and can be defined on the set of arcs.}
For an agent to decide to move from the vertex $s$, it is necessary that his initial estimate be no more than the expected reward, namely:
\begin{equation}
\label{upper}
a+ \beta \cdot (1 + w(\widehat P)) \leq \beta \cdot r.
\end{equation}
But on the other hand, when the agent reaches vertex $v_2$, we need him to continue his path to vertex $v_3$, and not immediately turn into the green arc that leads to vertex $v_4$. From here, we impose the following constraint on the cost of the added path:
\begin{equation}
\label{lower}
1+ \beta \cdot w(\widehat P) > \beta \cdot (Z+\dfrac{1}{\beta} - 1).
\end{equation}
Similarly, the condition that  the agent does not leave  the black path destined for him at vertex $v_3$ is that
\begin{equation}
\label{third}
y+ \beta \cdot w(\widehat P) < Z+\dfrac{1}{\beta} - 1.
\end{equation}

We put $c = 2k$, $a = k \cdot c \cdot b + 1= 2k^2b + 1$, $r = Z + \dfrac{a}{\beta} + 2 - \varepsilon$, where $\varepsilon$---rational number between 0 and 1. Then it is easy to check that all inequalities from system~\eqref{system:eq} are satisfied. We also should select the values of $y$ and $w(\widehat P)$ to satisfy the remaining
 inequalities~\eqref{upper}--\eqref{third}. Note that \eqref{upper}  is equivalent to 
$ w(\widehat P) \leq r - 1 - \dfrac{a}{\beta} = Z+1 - \varepsilon$ and
 \eqref{lower} is equivalent to  $w(\widehat P) > (Z + \dfrac{1}{\beta} - 1) - \dfrac{1}{\beta }= Z-1.$
 
Because the weights of all edges in our construction are integers, we have that under the already existing restrictions, if the agent could go along the black path, then we added the path from $v_4$ to $t$ along the arcs from $\{X_i\}_{i=1}^k$ of cost exactly $Z$.
In each $X_i$ the numbers are in the interval from $b > 0$ to $2b$, hence the value $Z$ does not exceed $2bk$. It follows that if we have found the $v_4$-$t$ path of cost $Z$, then we have added at least one of the proposed arcs in each gadget since the initial arcs have costs $c \cdot b = 2bk \geq Z$.

The inequality \eqref{third} is equivalent to $ y < (Z - \beta \cdot w(\widehat P)) + \dfrac{1}{\beta} - 1.$
Now let us analyze what will happen after the agent reaches vertex $v_4$. It is obvious that at all subsequent vertices, he will be motivated to move on (the estimate of his remaining path will not exceed the expected reward $\beta \cdot r$, since the estimate of the path does not exceed his estimate at vertex $s$, from which he decided to move on).
We only need to make sure that in each gadget the agent does not turn into a blue arc, but continues along the black ones. Since the minimum value of an arc in any $X_i$ is $b$, then the following inequality guarantees that the agent will make the right choice in each gadget:
$$\forall i \geq 4 \quad 0 + \beta \cdot (c \cdot b + w(v_{i+2}\text{-}t \text{ path})) < b + \beta \cdot w(v_{i+2}\text{-}t\text{ path}).$$
$$\beta < \dfrac{1}{c} = \dfrac{1}{2k}.$$

Thus, if we take a present bias coefficient $\beta < 1/2k$ into the input of the \tpathadd problem, then the agent will follow the black path if and only if it is possible to assemble a path of cost exactly $Z$ from the arcs $x_1 \in X_1, x_2 \in X_2, \dots, x_k \in X_k$, or, which is the same when the answer to the \textsc{Modified k-Sum} problem is "yes".

Now we will show how to fine-tune the parameters so that we do not have to force the addition of green arcs to the graph but provide them within the addition set, thus leaving only the path as the input.

In order to add arc $v_2 v_4$, it is necessary that even with the shortest path from vertex $v_3$ to vertex $t$, the agent is not motivated to move from  $s$. In other words, 
\begin{equation*}
\label{y1}
a + \beta \cdot(y + k \cdot b) > \beta \cdot r.
\end{equation*}
$$ y > r - \dfrac{a}{\beta} - kb = Z - kb + 2 - \varepsilon.$$
To add arc $v_3 t$, we need that at vertex $v_2$, the agent turns to the already added arc $v_2 v_4$. That is
\begin{equation*}
\label{y2}
0 + \beta \cdot y > 1.
\end{equation*}
%$$ y > \dfrac{1}{\beta}.$$
Finally, we set
$$
\begin{cases}
c = 2k, \\
a = 2k^2b+1, \\
\beta < \frac{1}{2k}, \\
r = Z + 2 + \frac{a}{\beta} - \varepsilon, \\
Z - kb + 2 - \varepsilon < y \leq Z - \beta \cdot w(\widehat P) + \frac{1}{\beta} - 1.
\end{cases}
$$
The last double inequality has a solution for $y$, since the left hand side is less than the right hand side:
$$Z - kb + 2- \varepsilon < Z - \beta \cdot w(\widehat P) + \frac{1}{\beta} - 1.$$
$$\beta w(\widehat P) - \frac{1}{\beta} +3 < kb+\varepsilon.$$
We show that starting from some $k$ the desired inequalities hold.
$$ \beta w(\widehat P) - \frac{1}{\beta} +3 < \frac{w(\widehat P)}{2k} - 2k + 3 \leq b - 2k + 3 < kb+ \varepsilon.$$
\end{proof}

By Theorem~\ref{thm-hard-addition}, 
 \tpathadd  is difficult even in the particular case when the set of selected tasks is a Hamiltonian path.
 On the other hand, the problem is easily solvable in time $2^{|A|} n^{\mathcal{O}(1)}$  by going through all potential solutions $S \subseteq A$, $|S| \le k$, and checking in polynomial time whether $S$ is a solution of the problem. Our next theorem generalizes this observation to the situation when the set $A$ has a nice ``separable'' structure. %\todo{define $\mathcal{O^*}$}

Let us start with an example. Let $c \in [n]$, $V_1 = \{v_1, \dots, v_c\}$, $V_2 = \{v_c, \dots, v_n\}$, and let $A$ do not contain any arc $(v_i, v_j)$ such that $i < c < j$.
Let us partition $A$ into two \emph{intersection components} $A_1 = A \cap (V_1 \times V_1)$ and $A_2 \cap (V_2 \times V_2)$, see Fig. \ref{intersections-2}.
We want to show that to solve the problem, we then can solve it on $G[V_1]$ and $G[V_2]$ separately in total time $(2^{|A_1|} + 2^{|A_2|}) \cdot n^{\mathcal{O}(1)}$.

\begin{figure}[ht]
 \center{\includegraphics[scale=0.4]{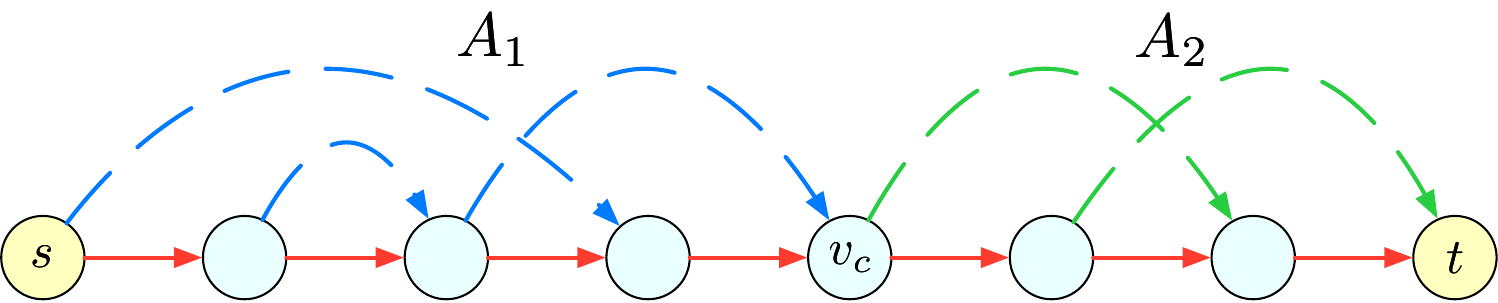}}
 \caption{Intersection components of set $A$ for the \tpathadd problem.}
 \label{intersections-2}
\end{figure}

Let $S \subseteq A$, $S_1 = S \cap A_1$, $S_2 = S \cap A_2$, $k_1 = |S_1|$, $k_2 = |S_2|$. 
Consider the agent's path in $G \cup S$ and also divide it into two parts going through $V_1$ and $V_2$, respectively.
Notice that in case the agent gets to $V_2$, the second part of the path is exactly the agent's path in $G[V_2] \cup S_2$.
Now let us consider the first part of the path.
Notice that for every vertex $v_i \in V_1$, any perceived path induces one of the shortest paths in $G[V_2] \cup S_2$.
That means that the agent's decisions depend only on $G[V_1] \cup S_1$ and $\dist_{G[V_2] \cup S_2}(v_c, v_n)$, where by $\dist_G(s, t)$ we denote weight of the shortest path in $G$ from $s$ to $t$. %weight $w$ of the shortest path in $G[V_2] \cup S_2$. 
Moreover, $\dist_{G[V_2] \cup S_2}(v_c, v_n)$ takes part only in comparison of a perceived cost with $\beta \cdot r$ that can be replaced with comparison of the perceived cost in $G[V_1] \cup S_1$ with $\beta (r - \dist_{G[V_2] \cup S_2}(v_c, v_n))$, so the first part of the agent's path is exactly the agent's path in $G[V_1] \cup S_1$ with reward $r - \dist_{G[V_2] \cup S_2}(v_c, v_n)$.

Hence, there exists a solution $S$ for the initial problem if and only if there exist $k_1, k_2 : k_1 + k_2 \le k$, $S_1 \subseteq A_1$ of size $k_1$ and $S_2 \subseteq A_2$ of size $k_2$ such that in $G[V_2] \cup S_2$ the agent follows path $v_c \dots v_n$ with reward $r$, and in $G[V_1] \cup S_1$ the agent follows path $v_1 \dots v_c$ with reward $r - \dist_{G[V_2] \cup S_2}(v_c, v_n)$. %w(S_2)$, where $w(S_2)$ is weight of the shortest path in $G[V_2] \cup S_2$. 
We also notice that for every such $(k_1, k_2, S_1, S_2)$, the above also holds for any $(k_1, k_2, S_1, S_2')$ where $S_2' \subseteq A_2$ of size $k_2$ such that the agent takes path $v_c, \dots, v_n$ in $G[V_2] \cup S_2'$, and $\dist_{G[V_2] \cup S_2'}(v_c, v_n) \le \dist_{G[V_2] \cup S_2}(v_c, v_n)$. % $w(S_2') \le w(S_2)$. 
That means, that it is
sufficient to consider only $S_2$ that minimizes $\dist_{G[V_2] \cup S_2}(v_c, v_n)$. %$w(S_2)$.

Now we can solve \tpathadd in time $(2^{|A_1|}+2^{|A_2|}) \cdot n^{\mathcal{O}(1)}$ in the following way.
For every $k_2 \le k$ we compute $d[k_2] = \{\dist_{G[V_2] \cup S_2}(v_c, v_n) \mid S_2 \subseteq A_2, |S_2| \le k_2$, the agent follows path $v_c \dots v_n$ in $G[V_2] \cup S_2$ with reward $r$\}. That can be done in time $2^{|A_2|} n^{\mathcal{O}(1)}$ by going through all $S_2 \subseteq A_2$.
Then, for every $k_1 \le k$ we go through all $S_1 \subseteq A_1$, $|S_1| = k_1$ and check whether the agent follows path $v_1 \cdots v_c$ in $G[V_1] \cup S_1$ with reward $r - d[k - k_1]$. That can be done in time $2^{|A_1|} n^{\mathcal{O}(1)}$.

Let us now generalize the result. 
Let $1 = c_1 < c_2 <\cdots < c_{m + 1} = n$ be indices such that for every $2 \le \ell \le m$ there is no edge $(v_i, v_j)$ such that $i < c_{\ell} < j$. 
For every $1 \le \ell \le m$, let $V_{\ell} = \{v_{c_{\ell}}, \dots, v_{c_{\ell + 1}}\}$, and let us partition $A$ into intersection components $A_{\ell} = A \cap (V_{\ell} \times V_{\ell})$. 
Then we show that the following theorem holds.

\begin{theorem}\label{thm:intersection}
    The \tpathadd problem on paths with detours can be solved in time $2^{\tau} n^{\mathcal{O}(1)}$, where $\tau$ is the size of the maximum intersection component of $A$.
\end{theorem}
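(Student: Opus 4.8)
The plan is to generalize the two–component argument above to a dynamic program that sweeps the intersection components of $A$ from right to left. First I would fix a \emph{canonical} partition: call an index $c$ a \emph{valid cut index} if no arc of $A$ strictly spans it, i.e. there is no $(v_i,v_j)\in A$ with $i<c<j$, and let $1=c_1<c_2<\cdots<c_{m+1}=n$ be \emph{all} valid cut indices. This is computable in polynomial time, it yields the finest decomposition into components $V_\ell=\{v_{c_\ell},\dots,v_{c_{\ell+1}}\}$ with $A_\ell=A\cap(V_\ell\times V_\ell)$, and $\tau=\max_\ell|A_\ell|$. The point of this definition is that for every $S\subseteq A$, each $v_{c_\ell}$ is a cut vertex of $G\cup S$, so every $v_i$-$t$ path with $v_i\in V_\ell$ passes through $v_{c_{\ell+1}}$, and a shortest such path is a shortest $v_i$-$v_{c_{\ell+1}}$ path inside $G[V_\ell]\cup S_\ell$ followed by a shortest $v_{c_{\ell+1}}$-$t$ path inside the union of the later components, where $S_\ell=S\cap A_\ell$.

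The heart of the proof is a decomposition lemma. Fix $S\subseteq A$ and set $d_{\ell+1}(S)=\sum_{\ell'=\ell+1}^{m}\dist_{G[V_{\ell'}]\cup S_{\ell'}}(v_{c_{\ell'}},v_{c_{\ell'+1}})$, with $d_{m+1}(S)=0$; by the cut-vertex structure this equals the weight of a shortest $v_{c_{\ell+1}}$-$t$ path in the later components. For $v_i\in V_\ell$ with $v_i\neq v_{c_{\ell+1}}$, the first edge of every $v_i$-$t$ path lies in $G[V_\ell]\cup S_\ell$ and the $v_{c_{\ell+1}}$-$t$ tail is fully discounted by $\beta$, so the perceived cost satisfies $\zeta(v_i)=\zeta'(v_i)+\beta\,d_{\ell+1}(S)$, where $\zeta'$ is the perceived cost taken in $G[V_\ell]\cup S_\ell$ with $v_{c_{\ell+1}}$ as target. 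Consequently the agent makes the same first-edge choice in both settings, and it does not abandon at $v_i$ iff $\zeta'(v_i)\le\beta\bigl(r-d_{\ell+1}(S)\bigr)$. Unfolding this over all $\ell$ gives: the agent follows the $T$-path $v_1\cdots v_n$ in $G\cup S$ if and only if, for every $\ell\in\{1,\dots,m\}$, the agent follows the subpath $v_{c_\ell}\cdots v_{c_{\ell+1}}$ in $G[V_\ell]\cup S_\ell$ with target $v_{c_{\ell+1}}$ and reward $r-d_{\ell+1}(S)$. (Checking the latter is one polynomial-time Kleinberg–Oren simulation.) I would also record the trivial but essential \emph{reward monotonicity}: if an agent follows a prescribed path for reward $\rho$, it also follows it for any $\rho'\ge\rho$, since a larger reward can only relax abandonment and never changes which path is perceived cheapest.

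Building on this, define $D_\ell[j]$ to be the minimum of $\dist(v_{c_\ell},t)$ over all selections $(\hat S_\ell,\dots,\hat S_m)$, $\hat S_{\ell'}\subseteq A_{\ell'}$, of total size at most $j$, for which the agent placed at $v_{c_\ell}$ in $G[V_\ell\cup\cdots\cup V_m]\cup\bigcup_{\ell'\ge\ell}\hat S_{\ell'}$ with reward $r$ follows $v_{c_\ell}\cdots t$; put $D_\ell[j]=+\infty$ if no such selection exists, and $D_{m+1}[\cdot]=0$. Using the decomposition lemma (applied to components $\ell+1,\dots,m$) the ``internal validity'' of the later selection is exactly the condition defining $D_{\ell+1}$, so the tail weight it contributes is some value $\ge D_{\ell+1}[j']$ where $j'$ is the budget it uses; invoking reward monotonicity we may always replace it by a tail-minimizing one. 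This yields the recurrence
\[
D_\ell[j]=\min\Bigl\{\dist_{G[V_\ell]\cup S_\ell}(v_{c_\ell},v_{c_{\ell+1}})+D_{\ell+1}[j']\Bigr\},
\]
the minimum over all $S_\ell\subseteq A_\ell$ and $j'$ with $|S_\ell|+j'\le j$, $D_{\ell+1}[j']<+\infty$, such that the agent follows $v_{c_\ell}\cdots v_{c_{\ell+1}}$ in $G[V_\ell]\cup S_\ell$ with target $v_{c_{\ell+1}}$ and reward $r-D_{\ell+1}[j']$. Both inclusions (that any genuine solution projects onto a witnessing $(S_\ell,j')$, and that any $(S_\ell,j')$ passing the test lifts to a genuine partial solution) follow directly from the decomposition lemma plus reward monotonicity. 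The answer to \tpathadd is ``yes'' iff $D_1[k]<+\infty$.

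For the running time: for each component $\ell$ we enumerate the $2^{|A_\ell|}$ subsets $S_\ell$, and for each we perform one shortest-path computation in a DAG and $\Oh(k)$ agent simulations, all in $n^{\Oh(1)}$ time; since $m,k\le n$ the total is $\sum_{\ell=1}^{m}2^{|A_\ell|}\cdot n^{\Oh(1)}\le m\cdot 2^{\tau}\cdot n^{\Oh(1)}=2^{\tau}\cdot n^{\Oh(1)}$. I expect the main obstacle to be the bookkeeping in the decomposition lemma and in justifying the tail-minimization: one must verify carefully that collapsing the weight handed to an earlier component to its minimum never causes the agent inside that earlier component to divert onto an added shortcut (the added arcs there are internal and the perceived costs of all its $v_i$-$t$ paths shift by the same additive term) and never spuriously changes an abandonment decision. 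The remaining ingredients—computing the canonical partition, running the dynamic program, and the per-step simulations—are routine.
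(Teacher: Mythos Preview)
Your proposal is correct and follows essentially the same approach as the paper: the same right-to-left dynamic program over the intersection components, with the table $D_\ell[j]$ coinciding with the paper's $d[\ell][\kappa]$ and the identical recurrence that enumerates $S_\ell\subseteq A_\ell$ and checks the agent on $G[V_\ell]\cup S_\ell$ with reward $r-D_{\ell+1}[j']$. You are more explicit than the paper in isolating the decomposition lemma and the reward-monotonicity step (the paper relegates both to the preceding $m=2$ discussion), but the underlying argument is the same.
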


\begin{proof}
    For every $1 \le \ell \le m$, let $A_{\ell}^{\cup} = A_{\ell} \cup \dots \cup A_{m}$ and let
    $d[\ell][\kappa] = \min \{\dist_{G[A_{\ell}^{\cup}] \cup S}(v_{c_{\ell}}, v_{n}) \mid S \subseteq A_{\ell}^{\cup}, |S| \le \kappa, \text{the agent follows path $v_{c_{\ell}} \dots v_n$ with reward $r$}\}$.
    If there is no such $S$ then $d[\ell][\kappa]$ is assigned to $+\infty$.
    Then, to solve the problem it is sufficient to compute all $d[\ell][\kappa]$ and check whether $d[1][k]$ is not equal to $+\infty$.

    We compute $d[\ell][\kappa]$ using dynamic programming technique. Using the same argument as in case of $m = 2$ above, we notice that $d[\ell][\kappa] = \min \{\dist_{G[A_{\ell}] \cup S}(v_{c_{\ell}}, v_{c_{\ell + 1}}) + d[\ell + 1][\kappa - \kappa'] \mid S \subseteq A_{\ell}, |S| \le \kappa' \le \kappa$, the agent follows path $v_{c_{\ell}} \dots v_{c_{\ell + 1}}$ in $G[A_{\ell}] \cup S$ with reward $r - d[\ell + 1][\kappa - \kappa'] \}$.
    We start with $\ell = m$ and iteratively decrease it after exhausting all possible $\kappa \le k$. Then, we can compute all $d[\ell][\kappa]$ and solve the problem in time $2^{\tau} n^{\mathcal{O}(1)}$.
\end{proof}

\section{Conclusion}
In this work, we use the graph-theoretical model of Kleinberg and Oren to introduce  the principal-agent problem, where the principal could reduce the choices 
 to guarantee that the agent will accomplish some selected tasks. 
We conclude with directions for further research and some concrete open problems. 
While we consider only the scenario of deleting and adding arcs, several other natural models would be exciting to explore. The process of adding and deleting arcs could be simulated by changing the weights of the arcs. This more general model, where the principal could change the weights of the arcs in order to motivate the agent, is much more algorithmically challenging. 
Another attractive model is where the principal motivates the agent by putting rewards for accomplishing some intermediate tasks like in   \cite{AlbersK17}. 

As a concrete open question, for the \tpathdel problem, we obtained a kernel whose size is polynomial in the size of a  feedback edge set of $G$. We do not know if a kernel whose size is bounded by a size (even exponential) of a vertex cover of $G$ exists. 

\section*{Acknowledgments}
\emph{Artur Ignatiev and Yuriy Dementiev and Tatiana Belova}: Research is supported by the grant \textnumero075-15-2022-289 for creation and development of Euler International Mathematical Institute.

\noindent
\emph{Tatiana Belova}: Research is supported by the Foundation for the Advancement of Theoretical Physics and Mathematics ``BASIS''.

\noindent
\emph{Fedor Fomin}:
The research leading to these results has been supported by Trond Mohn forskningsstiftelse (grant no. TMS2023TMT01).

\noindent
\emph{Petr Golovach}:
The research leading to these results has been supported by the Research Council of Norway (grant no. 314528).

%Bibliography
\bibliographystyle{unsrt}  
\bibliography{main}

\end{document}